\documentclass[sigconf]{acmart}

\usepackage{balance}  
\usepackage{epsfig}
\usepackage{epstopdf}
\usepackage{graphicx}
\usepackage{bookmark}
\usepackage{url}
\usepackage[ruled,boxed,vlined,linesnumbered]{algorithm2e}
\usepackage{mathrsfs}
\usepackage{graphics}
\usepackage{subcaption}
\usepackage{color}
\usepackage{booktabs}
\usepackage{amssymb}
\usepackage{footnote}
\usepackage{wrapfig}
\makesavenoteenv{tabular}
\makesavenoteenv{table}

\newtheorem{defn}{Definition}
\newtheorem{thm}{Theorem}

\newcommand{\tabincell}[2]{\begin{tabular}{@{}#1@{}}#2\end{tabular}}
\aboverulesep=0ex
\belowrulesep=0ex

\copyrightyear{2018} 
\acmYear{2018} 
\setcopyright{acmlicensed}
\acmConference[SIGIR '18]{The 41st International ACM SIGIR Conference on Research \& Development in Information Retrieval}{July      8--12, 2018}{Ann Arbor, MI, USA}
\acmBooktitle{SIGIR '18: The 41st International ACM SIGIR Conference on Research \& Development in Information Retrieval, July      8--12, 2018, Ann Arbor, MI, USA}
\acmPrice{15.00}
\acmDOI{10.1145/3209978.3210038}
\acmISBN{978-1-4503-5657-2/18/07}

\acmConference[SIGIR'18]{ACM SIGIR conference}{July 2018}{Ann Arbor, MI, USA}



\begin{document}
\title{Top-k Route Search through Submodularity Modeling of Recurrent POI Features}


\author{Hongwei Liang}
\affiliation{%
  \institution{School of Computing Science \\ Simon Fraser University, Canada}
}
\email{hongweil@sfu.ca}

\author{Ke Wang}
\affiliation{%
  \institution{School of Computing Science \\ Simon Fraser University, Canada}
}
\email{wangk@cs.sfu.ca}


\begin{abstract}
We consider a practical top-$k$ route search problem: given a collection of points of interest (POIs) with rated features and traveling costs between POIs, a user wants to find $k$ routes from a source to a destination and limited in a cost budget, that maximally match her needs on feature preferences. 
One challenge is 
dealing with the personalized diversity requirement where users have various trade-off between 
quantity (the number of POIs with a specified feature) and variety (the coverage of specified features). 
Another challenge is the large scale of the POI map and the great many
alternative routes to search. We model 
the personalized diversity requirement by the whole class of submodular functions, and
present an optimal solution to the top-$k$ route search problem through indices
for retrieving relevant POIs in both feature and route spaces 
and various strategies for pruning the search space using user preferences and constraints. 
We also present promising heuristic solutions and evaluate all the solutions on real life data.
\end{abstract}

%
%

%

\keywords{Location-based Search; Route Planning; Diversity Requirement}

\maketitle

\section{Introduction} \label{sec:intro}

%

The
dramatic growth of publicly accessible mobile/geo-tagged data has triggered a revolution in location based services \cite{junglas2008location}.
An emerging thread is route planning, with pervasive
applications in trip recommendation, intelligent navigation, ride-sharing, and augmented-reality gaming, etc.
According to \cite{wttc2017}, the travel and tourism industry directly and indirectly contributed US\$7.6 trillion to the global economy and supported 292 million jobs in 2016.
The majority of current route planning systems yields shortest paths or explores popular POIs \cite{zheng2009mining}, or recommends routes based on users' historical records \cite{kurashima2010travel} or crowdsourced experience \cite{quercia2014shortest}.

A practical problem that has not been well studied is that, a user wants to be suggested a small number of routes that not only satisfy her cost budget and spatial constraints, but also best meet
her personalized requirements on diverse POI features.
We instantiate this problem with a travel scenario.
Consider that a new visitor to Rome
wishes to be recommended a trip, starting from her hotel and ending at the airport, that allows her to visit museums, souvenir shops, and eat at some good Italian restaurants (not necessarily in this order) in the remaining hours before taking the flight.
She values the variety over the number of places visited, e.g., 
a route consisting of one museum, one shop, and one Italian restaurant is
preferred to a route consisting of two museums and two shops.


\begin{figure}[t]%
        \centering
        \includegraphics[width=2.7in]{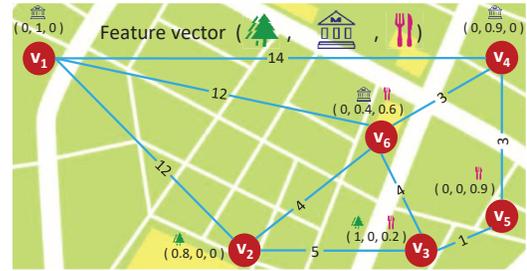}
        \caption{A sample POI map. Each node $v_i$ represents a POI with 3 features (Park, Museum, Restaurant). Each feature has a numeric rating in the range $[0,1]$, indicated by the vector aside the POI. Each edge has an associated traveling cost.
        }
        \label{fig:poi_map}
\end{figure}

%

The above problem is actually generalizable to various route planning 
scenarios, and they illustrate some common structures and requirements.
First, there is a \emph{POI map} where POIs are connected by edges with traveling cost between POIs, and each POI has a location and 
is associated with a vector of features (e.g., museum) with numeric or binary ratings. The POI map can be created from
Google Map, and features and ratings of POIs can be created from user rating and text tips 
available on location-based services such as Foursquare, or 
extracted from check-ins and
user provided reviews \cite{diao2014jointly}.
Second, the user seeks to find \emph{top-$k$ routes} $\{\mathcal{P}_1,\cdots,\mathcal{P}_k\}$, from a specified source $x$ to a specified destination $y$ within a travel cost budget $b$, that have highest values of a certain gain function 
$Gain({\mathcal{P}_i}_{V})$ for the set of POIs ${\mathcal{P}_i}_{V}$ on the routes $\mathcal{P}_i$. The 
user specifies her preference of routes through a weight vector $\mathbf{w}$ with $\mathbf{w}_{h}$ being the weight of a 
feature $h$, and 
a \emph{route diversity requirement}, which specifies 
a trade-off between quantity (the number of POIs with a preferred feature) and variety (the coverage of preferred features)
for the POIs on a route. 
The gain function 
has the form
$Gain(\mathcal{P}_V) = {\sum}_{h} \mathbf{w}_{h} \Phi_h(\mathcal{P}_V)$, where 
$\Phi_h$ for each feature $h$ aggregates the feature's scores of the POIs $\mathcal{P}_V$.


%

To better motivate the route diversity requirement, let us 
consider the POI map in Figure \ref{fig:poi_map} and a user with the source $v_1$, destination 
$v_5$ and the budget $b = 18$. The user weights the features Park and Museum using the vector  $\mathbf{w} = (0.4, 0.6, 0)$, and values \emph{both} quantity and variety.
If the sum aggregation $\Phi_h$ is used, the route $v_1 \rightarrow v_6 \rightarrow v_4 \rightarrow v_5$ will have 
the highest $Gain$. However, the user may not prefer this route because it does not include any park though
it includes 3 museums. With the max aggregation used, 
the route $v_1 \rightarrow v_3 \rightarrow v_5$ has the highest $Gain$ by including one top scored museum and one top scored park, but this route does not maximally use the entire budget available. 
Intuitively, the sum aggregation is ``quantity minded'' but ignores variety, whereas 
the max aggregation is the opposite; neither models a 
proper trade-off between quantity and variety as the user considered. 
The above user more prefers the route $v_1 \rightarrow v_2 \rightarrow v_3 \rightarrow v_5$ that visits 
multiple highly scored museums and parks, which will better address both quantity and variety.

Solving the top-$k$ route search problem
faces two challenges. 

\textbf{Challenge I}. One challenge is to design a general enough $\Phi_h$ that includes a large class of aggregation
functions to model a \emph{personalized} route diversity requirement where each user 
has her own quantity and variety trade-off. Our approach is treating 
the satisfaction by visiting each POI as the marginal utility
and modeling the aggregation of such utilities of POIs with 
the diminishing marginal utility property by \emph{submodular} set functions $\Phi_h$.
The intuition is that, as the user visits more POIs of the same, 
her marginal gain from such visits decreases gradually. 
Submodularity has been used for modeling user behaviors in many real world problems 
\cite{krause2008beyond}\cite{kempe2003maximizing}.
To the best of our knowledge, modeling user's diversity requirement on a route
by submodularity has not been done previously.


%




\textbf{Challenge II}. The top-$k$ route problem is NP-hard as it subsumes the NP-hard orienteering problem \cite{chekuri2005recursive}.
However, users typically demand the routes not only be in high quality, even optimal, but also be generated in a timely manner (seconds to minutes).
Fortunately, the users' preferences and constraints on desired routes provide new opportunities 
to reduce the search space and find optimal top-$k$ routes with fast responses.
For example, for a user with only 6-hour time budget and preferring museums and parks on a
route, all the POIs in other types or beyond the 6 hours limit will be irrelevant.
The key of an exact algorithm is to prune, as early as possible, such irrelevant POIs as well as the routes that are unpromising to make into the top-$k$ list due to a low gain $Gain(\mathcal{P}_V)$.
However, this task is complicated by the incorporation of a generic submodular aggregation function $\Phi_h$
motivated above in our objective $Gain(\mathcal{P}_V)$, and designing a tight \emph{upper bounding} strategy on $Gain(\mathcal{P}_V)$ 
for pruning unpromising routes is a major challenge.


\textbf{Contributions}.
The main contributions of this paper are:

$\bullet$
We define the top-$k$ route search problem with a new personalized route diversity 
requirement where the user can choose any submodular function  $\Phi_h$ to model her desired level of diminishing return. 
As an instantiation, we show that the family of power-law functions is a sub-family of submodular functions 
and can model a spectrum of personalized diversity requirement. 
(Section \ref{sec:prob})

$\bullet$
Our first step towards an efficient solution is to eliminate irrelevant POIs for a query, by proposing 
a novel structure for indexing the POI map
on both features and travel costs. 
This index reduces the POI map to a small set of POIs for a query.(Section \ref{sec:overview})

$\bullet$
Our second step towards an efficient solution is to
prune unpromising routes, by proposing a novel optimal algorithm, \texttt{PACER}. 
The novelties of the algorithm include an elegant route enumeration strategy for a compact representation
of search space and the reuse of computed results, a cost-based pruning for eliminating non-optimal routes, 
and a gain-based upper bound strategy for pruning routes that cannot make into the top-$k$ list. 
The algorithm works for \emph{any} submodular function $\Phi_h$. (Section \ref{sec:top-k})

$\bullet$
To deal with the looser query constraints, 
we present two heuristic algorithms with a good efficiency-accuracy trade-off, by finding a good solution with 
far smaller search spaces. (Section \ref{sec:heu})

$\bullet$
We evaluate our algorithms on real-world data sets. 
\texttt{PACER} provides optimal answers while being orders of magnitude faster than the baselines. 
The heuristic algorithms 
provide answers of a competitive quality and work efficiently for a larger POI map and/or a
looser query constraint. 
(Section \ref{se:exp})



\section{Related Work} \label{sec:relate}

Route recommendation/planning that suggests a POI sequence or a path is related to our work.
Works like \cite{kurashima2010travel} \cite{de2010automatic} learn from historical travel behaviors and recommend routes by either sequentially predicting the next location via a Markov model or globally constructing a route.
These works rely on users' historical visit data, thus, 
cannot be applied to a new user with no visit data or a user with dynamically changed preferences. 
\cite{basu2011interactive} interactively plans a route based on user feedback at each step. 
Our approach does not rely on user's previous visit data or interactive feedback, and works for any users by modeling the preferences through a query.
%

Several works recommend a route by maximizing user satisfaction under certain constraints. 
\cite{gionis2014customized} assumes that each POI has a single type and searches for a route with POIs following a pre-determined order of types.
\cite{zhang2016trip} allows the user to specify a minimum number of POI types, instead of exact types, in a route.
\cite{lu2012personalized} estimates temporal-based user preferences.
\cite{lim2017personalized} focuses on modeling the queuing time on POIs.
\cite{cao2012keyword} constructs an optimal route covering user-specified
categories associated with locations. None of them considers a general 
route diversity requirement for modeling user's quantity and variety trade-off. 

\cite{zeng2015optimal}, perhaps most related to our work, adopts a
keyword coverage function to measure the degree to which query keywords are covered by a route, similar to ours.
Their pruning strategies are designed specifically for their specific keyword coverage function; thus, does not address the personalized route diversity requirement, where a different submodular function may be required.
Our pruning strategies apply to any submodular function $\Phi_h$. 
Finally \cite{zeng2015optimal} produces a single route, and its performance 
is only ``2-3 times faster than the brute-force algorithm", as pointed in \cite{zeng2015optimal}. 

Less related to our work is the next POI recommendation \cite{zhang2015location} that aims to recommend the POI to be visited next, and the travel package recommendation \cite{liu2011personalized} that aims to recommend a set of POIs.  They are quite different from our goal of finding a route as a sequence of featured POIs. 
Trajectory search 
either retrieves \emph{existing} (segments of) trajectories that match certain similarity query  \cite{zheng2015approximate} from a database, or constructs a route based on the retrieved trajectories \cite{dai2015personalized}.
These works assume the existence of a trajectory database, instead of a POI map for route construction.

The classical Orienteering Problem (OP), such as  \cite{chekuri2005recursive}, studied in operational research on theoretical level, finds a path, limited in length, that visits some nodes and maximizes a global reward collected from the nodes on the path.
No POI feature or route diversity requirement is considered in OP.
\section{Preliminary} \label{sec:prob}

\begin{table}[]   %
\centering
\caption{Nomenclature}
\label{tab:notation}
\small
\begin{tabular}{cl} \toprule
\multicolumn{1}{c}{\textbf{Notation}} & \multicolumn{1}{c}{\textbf{Interpretation}}  \\ \toprule
	$\mathbf{F} \in \mathbb{R}^{|\mathcal{V}| \times |\mathcal{H}|}$ & POI-feature matrix $\mathbf{F}$ with POI set $\mathcal{V}$ and feature set $\mathcal{H}$ \\		
	$\mathbf{F}_{i,h}$ & the rating on feature $h \in \mathcal{H}$ for POI $i \in \mathcal{V}$ \\
	$s_i$ & staying cost on POI $i$ \\			
	$t_{i,j}$ & the traveling cost on edge $e_{i,j} \in \mathcal{E}$ \\		
	$T_{i,j}$ & the least traveling cost from any POI $i$ to any POI $j$ \\	
	$\mathcal{P}$, $\mathcal{P}_V$ & route $\mathcal{P}$ with the included POI set $\mathcal{P}_V$ \\ \midrule
	\tabincell{c}{$Q = $ \\$(x, y, b,$ \\ $\mathbf{w}, \boldsymbol{\theta}, \Phi$)} & \tabincell{l}{user query with parameters:\\ $x$ and $y$ -- source and destination location \\$b$ -- travel cost budget \\$\mathbf{w} \in \mathbb{R}^{|\mathcal{H}|}$ -- feature preference vector \\$\boldsymbol{\theta} \in \mathbb{R}^{|\mathcal{H}|}$ -- filtering vector on feature ratings \\ $\Phi$ -- feature aggregation functions}	\\ \midrule
	$\mathcal{V}_Q$, $n$ & POI candidates set $\mathcal{V}_Q$ retrieved by $Q$ with its size $n$ \\ 
	$\tilde{\mathbf{F}}_{i,h}$ & $\mathbf{F}_{i,h}$ after filtered by $\boldsymbol{\theta}$ \\ 
	$Gain(\mathcal{P}_V, Q)$ & \emph{gain} of a route $\mathcal{P}$ given query $Q$ \\ 
	\bottomrule

\end{tabular}
\vspace{-3pt}
\end{table}

Table \ref{tab:notation} summarizes the notations frequently
used throughout the paper.
The variables in \textbf{bold-face} are vectors or matrices.

\subsection{Problem Statement} \label{sec:interest}

\begin{defn}\label{def:map}
\textbf{[A POI Map]} A POI map $G=(\mathcal{V},\mathcal{E})$ is a directed/undirected and connected graph, where $\mathcal{V}$ is a set of geo-tagged POI nodes and $\mathcal{E} \subseteq \mathcal{V} \times \mathcal{V}$ is a set of edges between nodes $(i,j)$, $i,j \in \mathcal{V}$.
$\mathcal{H}$ is a set of features on POIs.
$\mathbf{F} \in \mathbb{R}^{|\mathcal{V}| \times |\mathcal{H}|}$ denotes the POI-feature matrix, where $\mathbf{F}_{i,h} \in [0,\beta]$ is the rating on a feature $h$ for the POI $i$.
Each POI $i \in \mathcal{V} $ is associated with a staying cost $s_i$.
Each edge $e_{i,j} \in \mathcal{E}$ has a travel cost $t_{i,j}$. $\Box$
\end{defn}

The choices of $s_i$ and $t_{i,j}$ can be time, expenses, or other costs.

\begin{defn}
\label{def:trip}
\textbf{[Routes]} A route $\mathcal{P}$ is a path $x \rightarrow \cdots i \cdots \rightarrow y$ in $G$ from the origin $x$ to the destination $y$ through a sequence of non-repeating POIs $i$ except possibly $x = y$.
$\mathcal{P}_V$ denotes the set of POIs on $\mathcal{P}$.
$T_{i,j}$ denotes the least traveling cost from $i$ to the next visited $j$, where $i,j$ are not necessarily adjacent in $G$.
The cost of $\mathcal{P}$ is 
\begin{equation}
cost(\mathcal{P})={\sum}_{i \in \mathcal{P}_V} s_i + {\sum}_{i \rightarrow j \in \mathcal{P} } \ T_{i,j}. \ \ \ \ \ \ \Box\
\end{equation}
\end{defn}

A route $\mathcal{P}$ includes only the POIs $i$ that the user actually ``visits'' by staying at $i$ with $s_i > 0$.
Each $i \rightarrow j$ on a route is a path from $i$ to $j$ with the least traveling cost $T_{i,j}$. 
The intermediate POIs between $i,j$ on path $i \rightarrow j$ are not included in $\mathcal{P}$.
The staying times at $x$ and/or $y$ can be either considered
or ignored depending on the user choice. The latter case can
be modeled by setting $s_x = s_y = 0$.


At the minimum, the user has an origin $x$ and a destination $y$ for a route, not necessarily distinct, and a budget $b$ on the cost of the route.
In addition, the user may want the POIs to have certain features specified by a $|\mathcal{H}|$-dimensional weight vector $\mathbf{w}$ with each element $\mathbf{w}_{h} \in [0,1]$ and $\Sigma_h \mathbf{w}_{h}=1$.
The user can also specify a filtering vector $\boldsymbol{\theta}$ so that $\mathbf{F}_{i,h}$ is set to 0 if it is less than $\boldsymbol{\theta}_{h}$.
$\tilde{\mathbf{F}}_{i,h}$ denotes $\mathbf{F}_{i, h}$ after this filtering.
Finally, the user may specify a route diversity requirement through a feature aggregation function vector $\Phi$, with $\Phi_{h}$ for each feature $h$. $\Phi_{h}(\mathcal{P}_V)$ aggregates
the rating on feature $h$ over the POIs in $\mathcal{P}_V$.
See more details in Section \ref{sec:ref_prob}.

\begin{defn}\label{def:user}
\textbf{[Query and Gain]} A query $Q$ is a 6-tuple $(x,y, b,\mathbf{w},$ $\boldsymbol{\theta}, \Phi)$.
A route $\mathcal{P}$ is \emph{valid} if $cost(\mathcal{P})\leq b$.
The \emph{gain} of $\mathcal{P}$ w.r.t. $Q$ is
\begin{equation} \label{eq:gain_power}
Gain(\mathcal{P}_V,Q) = {\sum}_{h} \mathbf{w}_{h}  \Phi_{h}(\mathcal{P}_V).
\ \ \ \ \ \ \Box
\end{equation}
\end{defn}

Note that only the specification of $x,y, b$ is required; if the specification of $\mathbf{w}, \boldsymbol{\theta}, \Phi$ is not provided by a user,
their default choices can be used, or can be learned from users' travel records if such data are available (not the focus of this paper).
$Gain(\mathcal{P}_V,Q)$ is a set function and all routes $\mathcal{P}$
that differ only in the order of POIs have the same $Gain$, and the order of POIs affects only $cost(\mathcal{P})$.


\textbf{[Top-$k$ route search problem]}
Given a query $Q$ and an integer $k$, the goal is to find  $k$ valid routes $\mathcal{P}$ that have different POI sets $\mathcal{P}_V$ (among all routes having the same $\mathcal{P}_V$,
we consider only the route with the smallest $cost(\mathcal{P})$) and the highest $Gain(\mathcal{P}_V, Q)$ (if ties, ranked by $cost(\mathcal{P})$). The $k$ routes are denoted by $topK$.
%

In the rest of the paper, we use $Gain(\mathcal{P}_V)$ for $Gain(\mathcal{P}_V,Q)$.



%

\subsection{Modeling Route Diversity Requirement}\label{sec:ref_prob}
To address the personalized route diversity requirement, we consider a submodular
$\Phi_h$ to model the diminishing marginal utility
as more POIs with feature $h$ are added to a route.
 A set function $f: 2^V \rightarrow \mathbb{R}$ is \emph{submodular} if for every $X \subseteq Y \subseteq V$ and $v \in V\setminus Y$, $f(X \cup \{v\}) - f(X) \geq f(Y \cup \{v\}) - f(Y)$, and is \emph{monotone}
if for every $X \subseteq Y \subseteq V$, $f(X) \leq f(Y)$.
The next theorem follows from \cite{krause2012submodular} and the fact that  $Gain(\mathcal{P}_V)$ is a nonnegative linear combination of $\Phi_h$.

\begin{thm}\label{thm:sub}
If for every feature $h$, $\Phi_{h}(\mathcal{P}_V)$ is nonnegative, monotone and submodular,
so is $Gain(\mathcal{P}_V)$.
\end{thm}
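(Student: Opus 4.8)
The plan is to verify each of the three asserted properties --- nonnegativity, monotonicity, and submodularity --- directly from their definitions, exploiting the single structural fact highlighted in the text: $Gain(\mathcal{P}_V)$ is a \emph{nonnegative} linear combination of the aggregation functions, $Gain(\mathcal{P}_V) = \sum_h \mathbf{w}_h \Phi_h(\mathcal{P}_V)$. The one observation that makes everything go through is that each weight satisfies $\mathbf{w}_h \in [0,1]$ and hence $\mathbf{w}_h \geq 0$, so multiplying any of the three defining inequalities for $\Phi_h$ by $\mathbf{w}_h$ preserves its direction. The whole argument is then an instance of the closure of these properties under nonnegative linear combinations, which is the content of the cited result \cite{krause2012submodular}.

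First I would dispatch nonnegativity and monotonicity together, since both are immediate. For nonnegativity, each $\Phi_h(\mathcal{P}_V) \geq 0$ and each $\mathbf{w}_h \geq 0$, so every summand $\mathbf{w}_h \Phi_h(\mathcal{P}_V)$ is nonnegative and a finite sum of nonnegative terms is nonnegative. For monotonicity, I would take arbitrary $X \subseteq Y$ and use $\Phi_h(X) \leq \Phi_h(Y)$ for each $h$; scaling by $\mathbf{w}_h \geq 0$ gives $\mathbf{w}_h \Phi_h(X) \leq \mathbf{w}_h \Phi_h(Y)$, and summing over $h$ yields $Gain(X) \leq Gain(Y)$. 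Next, for submodularity, I would fix arbitrary $X \subseteq Y$ and $v \notin Y$ and apply the submodular inequality for each $\Phi_h$, namely $\Phi_h(X \cup \{v\}) - \Phi_h(X) \geq \Phi_h(Y \cup \{v\}) - \Phi_h(Y)$; multiplying by $\mathbf{w}_h \geq 0$ and summing over $h$, the left-hand sides combine into $Gain(X \cup \{v\}) - Gain(X)$ and the right-hand sides into $Gain(Y \cup \{v\}) - Gain(Y)$, which is exactly the diminishing-returns inequality required of $Gain$.

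There is no substantive obstacle here, and I would not expect a subtle step to emerge: the statement is the textbook fact that nonnegative linear combinations preserve each of the three properties. The only point that genuinely must not be glossed over is the sign of the weights --- were any $\mathbf{w}_h$ negative, the multiplication step would reverse the monotonicity and submodularity inequalities and the conclusion could fail --- and this is precisely what the constraint $\mathbf{w}_h \in [0,1]$ from the query definition guarantees.
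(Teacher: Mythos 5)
Your proof is correct and follows exactly the same route as the paper, which simply cites the closure of nonnegativity, monotonicity, and submodularity under nonnegative linear combinations (via \cite{krause2012submodular}) together with the fact that $\mathbf{w}_h \geq 0$. You have merely written out the elementary verification behind that cited closure property, including the one point that matters --- the sign of the weights --- so nothing is missing.
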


We aim to provide a general solution to the top-$k$ route search problem for any nonnegative, monotone and submodular $\Phi_h$,
which model various personalized route diversity requirement.
To illustrate the modeling power of such $\Phi_h$,
for example, consider $\Phi_h$ defined by the power law function
\begin{equation} \label{eq:gain_power_2}
\Phi_{h}(\mathcal{P}_V) = {\sum}_{i \in \mathcal{P}_V} {R_{h}(i)}^{-\boldsymbol{\alpha}_h} \tilde{\mathbf{F}}_{i,h},
\end{equation}
where $R_{h}(i)$ is the rank of POI $i$ on the rating of feature $h$ among all the POIs in $\mathcal{P}_V$ (the largest value ranks the first), rather than the order that $i$ is added to $\mathcal{P}$, and $\boldsymbol{\alpha}_h \in [0, +\infty)$ is the power law exponent for feature $h$.
${R_{h}(i)}^{-\boldsymbol{\alpha}_h}$ is non-increasing as $R_{h}(i)$ increases.
For a sample route $\mathcal{P} = A(3) \rightarrow B(5)$ with the ratings of feature $h$ for each POI in the brackets, and $\boldsymbol{\alpha}_h = 1$,
the ranks for $A$ and $B$ on $h$ are $R_{h}(A) = 2$ and $R_{h}(B) = 1$. Thus,
$\Phi_{h}(\mathcal{P}_V) = 2^{-1} \times 3 + 1^{-1} \times 5$, with a diminishing factor $2^{-1}$ for the secondly ranked $A$.
If we use a larger $\boldsymbol{\alpha}_h = 2$, $\Phi_{h}(\mathcal{P}_V) = 2^{-2} \times 3 + 1^{-2} \times 5$ has a larger diminishing factor for $A$.

In general, a larger $\boldsymbol{\alpha_h}$ means a faster diminishing factor for the ratings $\tilde{\mathbf{F}}_{i,h}$ on the recurrent feature, i.e., a diminishing marginal value on $h$.
Note that the sum aggregation ($\boldsymbol{\alpha_h} = 0$) and the max aggregation ($\boldsymbol{\alpha_h}=\infty$)
are the special cases. 
Hence, Eqn. (\ref{eq:gain_power_2}) supports a spectrum of diversity requirement 
through the setting of $\boldsymbol{\alpha_h}$.

Note that  $R_{h}(i)$ for an existing POI $i$ may decrease when a new POI $j$ is added to $\mathcal{P}$, so it is incorrect to compute the new $\Phi_{h}(\mathcal{P}_V)$ by simply adding the marginal brought by $j$ to existing value of $\Phi_{h}(\mathcal{P}_V)$. For ease of presentation, we assume $\boldsymbol{\alpha}_h$ has same value for all $h$
and use $\boldsymbol{\alpha}$ for $\boldsymbol{\alpha}_h$ in the rest of the paper.



\begin{thm} \label{thm:sub_agg}
$\Phi_{h}(\mathcal{P}_V)$ defined in 
Eqn. (\ref{eq:gain_power_2}) is nonnegative, monotone and submodular.
\end{thm}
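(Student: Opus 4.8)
The plan is to prove the three properties separately, ordering them by difficulty, and the crux will be rewriting $\Phi_h$ so that its rank-dependent coefficients are absorbed into a concave function of a simple count. Nonnegativity is immediate: every summand $R_h(i)^{-\boldsymbol{\alpha}}\tilde{\mathbf{F}}_{i,h}$ is a product of two nonnegative factors (the filtered rating $\tilde{\mathbf{F}}_{i,h}\in[0,\beta]$ and the weight $R_h(i)^{-\boldsymbol{\alpha}}>0$), so their sum is nonnegative.

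For monotonicity and submodularity I would rewrite $\Phi_h$ through a layer-cake / threshold decomposition. Write $c_j=j^{-\boldsymbol{\alpha}}$ and note that, since $\boldsymbol{\alpha}\ge 0$, the sequence $c_1\ge c_2\ge\cdots\ge 0$ is nonnegative and non-increasing. Let $C(n)=\sum_{j=1}^{n}c_j$ (with $C(0)=0$) be its prefix sum, and for a set $S=\mathcal{P}_V$ and a threshold $t\ge 0$ let $N_S(t)=|\{i\in S:\tilde{\mathbf{F}}_{i,h}>t\}|$ be the number of POIs in $S$ whose feature-$h$ rating exceeds $t$. Because the weights $c_j$ are assigned to the ratings in decreasing order, the POIs with rating above $t$ occupy exactly ranks $1,\dots,N_S(t)$; hence, expanding each rating as $\tilde{\mathbf{F}}_{i,h}=\int_0^\infty \mathbf{1}[\tilde{\mathbf{F}}_{i,h}>t]\,dt$ and exchanging the finite sum with the integral yields the key identity
\[
\Phi_h(\mathcal{P}_V)=\int_0^\infty C\bigl(N_S(t)\bigr)\,dt .
\]
The integrand vanishes for $t\ge\beta$, so the integral is finite, and any tie-breaking among equal ratings is irrelevant since the thresholds that coincide with a rating form a measure-zero set.

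From this representation the remaining two properties reduce to two elementary facts. First, for each fixed $t$, $N_S(t)=\sum_{i\in S}\mathbf{1}[\tilde{\mathbf{F}}_{i,h}>t]$ is a nonnegative modular (additive) set function of $S$. Second, $C$ is non-decreasing (its increments $c_n\ge 0$) and concave on the integers (its increments $c_n=n^{-\boldsymbol{\alpha}}$ are non-increasing). I would then invoke the standard composition fact that a non-decreasing concave function of a nonnegative modular function is monotone and submodular: monotonicity is immediate from $N_X(t)\le N_Y(t)$ whenever $X\subseteq Y$, while submodularity holds because adding an element changes $N_S(t)$ by $\mathbf{1}[\tilde{\mathbf{F}}_{v,h}>t]\in\{0,1\}$ and the corresponding increment $C(N_S(t)+1)-C(N_S(t))=c_{N_S(t)+1}$ is non-increasing in $N_S(t)$, so the marginal at the smaller set $X$ is at least that at the larger set $Y$. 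Thus $C(N_S(t))$ is monotone submodular for every $t$, and since these properties are preserved under the nonnegative integration $\int_0^\infty(\cdot)\,dt$, so is $\Phi_h$.

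The main obstacle that a naive direct argument runs into, and that this representation circumvents, is that inserting a new POI $v$ into $\mathcal{P}_V$ decreases the ranks $R_h(i)$ of all existing POIs ranked below $v$; consequently the marginal gain is not simply $R_h(v)^{-\boldsymbol{\alpha}}\tilde{\mathbf{F}}_{v,h}$ but also contains a negative shift on every lower-ranked term, and comparing these shifting marginals across $X\subseteq Y$ term by term is awkward. The threshold decomposition replaces the rank-dependent weights by the single concave profile $C$, reducing the whole statement to the textbook concave-of-modular lemma, so the only genuine work is verifying the displayed identity.
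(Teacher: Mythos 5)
Your proof is correct, and it takes a genuinely different --- and more rigorous --- route than the paper. The paper explicitly omits the formal submodularity proof and gives only an intuition: it compares the rank of the new POI $v$ in $X\cup\{v\}$ versus $Y\cup\{v\}$ and concludes that ``the increment brought by $v$ to $X$ is not less than that to $Y$.'' As you correctly identify, this glosses over the delicate point: adding $v$ also demotes every lower-ranked POI already present, so the marginal gain is not simply ${R_h(v)}^{-\boldsymbol{\alpha}}\tilde{\mathbf{F}}_{v,h}$, and comparing $v$'s ranks alone does not close the argument. Your layer-cake identity $\Phi_h(\mathcal{P}_V)=\int_0^\infty C\bigl(N_S(t)\bigr)\,dt$ handles both effects simultaneously: for each threshold $t$ the marginal of adding $v$ is exactly $c_{N_S(t)+1}\cdot\mathbf{1}[\tilde{\mathbf{F}}_{v,h}>t]$, which is pointwise non-increasing as the base set grows because $c_j=j^{-\boldsymbol{\alpha}}$ is non-increasing, and the inequality survives integration. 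The identity itself is exact for every $t$ (the POIs with rating above $t$ always occupy a prefix of the ranking, ties included), so your measure-zero caveat is not even needed; moreover the integrand is piecewise constant with finitely many breakpoints, so the ``integral'' is really a finite nonnegative combination and no analytic care is required when passing monotonicity and submodularity through it. What your approach buys: a complete proof where the paper offers only a sketch, and extra generality --- the argument uses nothing about power laws beyond the fact that the rank coefficients are nonnegative and non-increasing, so it establishes the theorem for any such diminishing weight profile. The only real work, as you say, is the two-line verification of the displayed identity after sorting the ratings.
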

\begin{proof}
The nonnegativity and monotonicity of $\Phi_{h}(\mathcal{P}_V)$ in Eqn. (\ref{eq:gain_power_2}) is straightforward. 
For its submodularity, we omit the mathematical proof due to limited space and only present an intuitive idea. 
Let $X$ and $Y$ be the set of POIs in two routes, $X \subseteq Y$.
Intuitively, for every $i \in X$, there must be $i \in Y$ and $i$'s rank in $X$ is not lower than that in $Y$.
Let $v \in \mathcal{V}\setminus Y$ so that $X' = X \cup \{v\}$ and $Y' = Y \cup \{v\}$.
Similarly, $v$' rank in $X'$ is not lower than that in $Y'$, thus the increment brought by $v$ to $X$ is not less than that to $Y$, which means $\Phi_{h}(X') - \Phi_{h}(X) \geq \Phi_{h}(Y') - \Phi_{h}(Y)$.
Hence, it is submodular.
\end{proof}

The user can also personalize her diversity requirement by specifying any other submodular $\Phi_h$, such as a log utility function
$\Phi_{h}(\mathcal{P}_V) = log(1 + {\sum}_{i \in \mathcal{P}_V} \tilde{\mathbf{F}}_{i,h})$ and the coverage function $\Phi_{h}(\mathcal{P}_V) = 1- {\prod}_{i \in \mathcal{P}_V} [1 - \tilde{\mathbf{F}}_{i,h}]$.
Our approach only depends on the submodularity of $\Phi_h$, but is independent of the exact choices of such functions.

Our problem subsumes two NP-hard problems, i.e., the submodular maximization problem \cite{krause2012submodular} and the orienteering problem \cite{chekuri2005recursive}.

\subsection{Framework Overview}
To efficiently deal with the high computational complexity of this problem, we divide the overall framework into
the offline component and the online component.
Before processing any query, the offline component carefully indexes the POI map on feature and cost dimensions for speeding up future POI selection and travel cost computation. 
The online component responds to the user query $Q$ with 
\emph{Sub-index Retrieval} that extracts the sub-indices relevant to $Q$, 
and \emph{Routes Search} that finds the top-$k$ routes using the sub-indices.
For routes search, as motivated in Section \ref{sec:intro}, we consider both the exact algorithm with novel pruning strategies,
and heuristic algorithms to deal with the worst case of less constrained $Q$.


We first consider an indexing strategy 
in Section \ref{sec:overview}, and then consider routes search algorithms in Section \ref{sec:top-k} and Section \ref{sec:heu}.

\section{Indexing} \label{sec:overview}

In this section, we explain the offline indexing component and the Sub-index Retrieval of the online component.

\subsection{Offline Index Building} \label{sec:indexing}
The POI map data is stored on disk. To answer user queries rapidly with low I/O access and speed up
travel cost computation,
we build two indices, $\mathsf{FI}$ and $\mathsf{HI}$ stored on disk.

$\mathsf{\textbf{FI}}$ is an inverted index mapping each feature $h$ to a list of POIs having non-zero rating on $h$.
An entry $(v_i, \mathbf{F}_{i,h})$ indicates the feature rating $\mathbf{F}_{i,h}$ for POI $v_i$, sorted in descending order of $\mathbf{F}_{i,h}$.
$\mathsf{\textbf{FI}}$ helps retrieving the POIs related to the features
specified by a query. 



The least traveling cost $T_{i,j}$ between two arbitrary POIs $i$ and $j$ is frequently required in the online component. To compute $T_{i,j}$ efficiently,
we employ the 2-hop labeling \cite{jiang2014hop} for point-to-point shortest distance querying on weighted graphs. 
\cite{jiang2014hop} shows scalable results for finding 2-hop labels for both unweighted and weighted graphs, and the constructed labels return \emph{exact} shortest distance queries.
Our $\mathsf{HI}$ index is built using the 2-hop labeling method. 


$\mathsf{\textbf{HI}}$.
For an undirected graph, there is one list of labels for each node $v_i$, where each label $(u, d)$ contains a node $u \in \mathcal{V}$, called \emph{pivot}, and the least traveling cost $d$ between $v_i$ and $u$. $\mathsf{HI}(v_i)$
denotes the list of labels for $v_i$, sorted in the ascending order of $d$.
According to \cite{jiang2014hop}, $T_{i,j}$
between $v_i$ and $v_j$
is computed by
%
%
\begin{equation} \label{eq:cost}
T_{i,j} = \underset{(u, d_1) \in  \mathsf{HI}(v_i) \cap (u, d_2) \in\mathsf{HI}(v_j) }{ \min } (d_1 + d_2).
\end{equation}

\begin{figure}[]%
        \centering
        \includegraphics[width=2.8in]{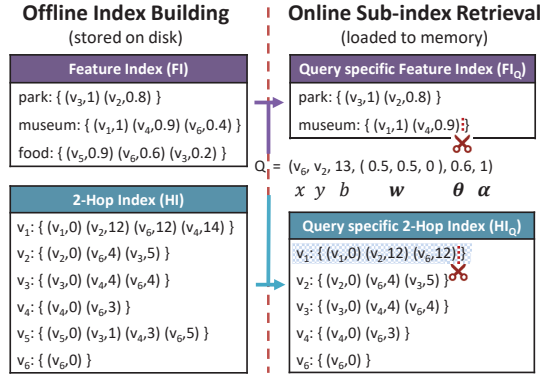}
        \caption{Left part: $\mathsf{FI}$ and $\mathsf{HI}$ built from the POI map in Figure \ref{fig:poi_map}.
        Right Part: Given a query $Q$, retrieve POI candidates $\mathcal{V}_{Q}$ by
         retrieving the subindices $\mathsf{FI_Q}$ and $\mathsf{HI_Q}$ from $\mathsf{FI}$ and $\mathsf{HI}$.
        }
        \label{fig:index}
\end{figure}

Figure \ref{fig:index} (left part) shows the FI and HI for the POI map in Figure \ref{fig:poi_map}.
For example, to compute $T_{2,5}$, we search for the common pivot nodes $u$ from the pivot label lists of $v_2$ and $v_5$ and find that $v_3$ minimizes the traveling cost between $v_2$ and $v_5$, so $T_{2,5} = 5 + 1 = 6$.

In the case of a directed graph, each POI $v_i$ will have
two lists of labels in $\mathsf{HI}$, $\mathsf{HI}(v_i^{out})$ for $v_i$ as the source,
and $\mathsf{HI}(v_i^{in})$ for $v_i$ as the destination. And we simply replace $v_i$ with $v_i^{out}$ and $v_j$ with $v_j^{in}$ in Eqn. (\ref{eq:cost}) to compute $T_{i,j}$.

\subsection{Online Sub-index Retrieval} \label{sec:filter}
Given a query $Q$, the first thing is to retrieve the POI candidates $\mathcal{V}_Q$ that are likely to be used in the routes search part.
In particular, the POIs that do not contain any feature in the preference vector $\mathbf{w}$ or do not pass any threshold in $\boldsymbol{\theta}$ will never be used, nor the ones that cannot be visited on the way from the source $x$ to the destination $y$ within the budget $b$.
This is implemented by retrieving the query specific sub-indices $\mathsf{FI_Q}$ from $\mathsf{FI}$ and $\mathsf{HI_Q}$ from $\mathsf{HI}$.

Figure \ref{fig:index} (right part) shows how the retrieval works for a query 
$Q = (x = v_6, y = v_2, b = 13, \mathbf{w} = (0.5, 0.5, 0), \boldsymbol{\theta} = 0.6, \boldsymbol{\alpha} = 1)$,
where the weights in $\mathbf{w}$ are for (Park, Museum, Food), and $\boldsymbol{\alpha}$ is the power law exponent in Eqn. (\ref{eq:gain_power_2}).
Here the elements in each vector $\boldsymbol{\theta}$ and $\boldsymbol{\alpha}$ have the same value for all features.

$\mathsf{\textbf{FI}_\textbf{Q}}$, a sub-index of $\mathsf{FI}$, is retrieved using $\mathbf{w}$ and $\boldsymbol{\theta}$.  $\mathbf{w}$ directly locates the lists for the user preferred (with $\mathbf{w}_{h} > 0$) features. $\boldsymbol{\theta}$ is used to cut off lower rated POIs on the sorted lists indicated by red scissors. $\mathcal{V}_Q = \{v_1, v_2, v_3, v_4\}$ contains the remaining POIs.

$\mathsf{\textbf{HI}_\textbf{Q}}$, a sub-index of $\mathsf{HI}$, is then formed by retrieving the lists for each POI in $\mathcal{V}_{Q}$ and also those for $x$ and $y$, and $b$ is used to cut off the sorted lists, indicated by red scissors.
We also check whether a POI $i$ in current $\mathcal{V}_{Q}$ is actually reachable by checking the single-point visit cost: if $s_x + T_{x,i} + s_i + T_{i, y} + s_y > b$, we remove $i$ from $\mathcal{V}_{Q}$ and remove its list from $\mathsf{HI_Q}$, as indicated by the blue shading.
Then we get the final POI candidates $\mathcal{V}_{Q}$.
Typically, $|\mathcal{V}_{Q}| \ll |\mathcal{V}|$.




$\mathsf{\textbf{FI}_\textbf{Q}}$ and $\mathsf{\textbf{HI}_\textbf{Q}}$ are retrieved only once and kept in memory.

\section{Optimal Routes Search} \label{sec:top-k}
With POI candidate set $\mathcal{V}_{Q}$ and the sub-indices extracted, the next step is the Routes Search phase.
We present an optimal routes search algorithm in this section.
Considering the complexity and generality of the problem, 
a standard tree search or a traditional algorithm for the orienteering problem does not work.
An ideal algorithm design should meet the following goals:
i. search all promising routes in a smart manner without any redundancy;
ii. prune unpromising routes as aggressively as possible while preserving the optimality of the top-$k$ answers;
iii. ensure that the search and pruning strategies are applicable to any nonnegative, monotone and submodular aggregation functions $\Phi_h$.
To this end, we propose a novel algorithm, \textbf{P}refix b\textbf{A}sed \textbf{C}ompact stat\textbf{E}s g\textbf{R}owth (\textbf{PACER}), that incorporates the idea of dynamic programming
and fuses a cost-based pruning strategy and a gain-based pruning strategy 
in an unified way. 






Next, we present our enumeration and
pruning strategies, followed by the detailed algorithm and the complexity analysis.

\subsection{Prefix-based Compact State Growth} \label{sec:enumeration}
A route $\mathcal{P}$ is associated with several variables:  $\mathcal{P}_{V}$, $Gain(\mathcal{P}_V)$, the ending POI $end(\mathcal{P})$, and $cost(\mathcal{P})$.
If $x$ is not visited, $s_x$ and $\tilde{\mathbf{F}}_{x,h}$ for every $h$ are set to 0;
the same is applied to $y$.
A POI sequence is an \emph{open route} if it starts from $x$ and visits several POIs other than $y$; it is a \emph{closed route} if it starts from $x$ and ends at $y$. The initial open route includes only $x$.
An open route $\mathcal{P}$ is \emph{feasible} if its closed form $\mathcal{P} \rightarrow y$  satisfies $cost(\mathcal{P} \rightarrow y)\leq b$. In the following discussion, $\mathcal{P}$ denotes either an open route or a closed route.
An open route $\mathcal{P}^{-}$ with $end(\mathcal{P}^-)=i$ can be extended into a longer open route $\mathcal{P} = \mathcal{P}^{-} \rightarrow j$ by a POI $j \not\in \mathcal{P}^{-}_{V} \cup \{y\}$. The variables for  $\mathcal{P}$ are
\begin{equation}\label{eq:new_state}
\left\{
  \begin{array}{l}
    \mathcal{P}_V = \mathcal{P}^{-}_V \cup \{j\} \\
    Gain(\mathcal{P}_V) = {\sum}_{h} \mathbf{w}_{h} \Phi_h(\mathcal{P}_V) \\
    end(\mathcal{P}) = j \\
    cost(\mathcal{P}) = cost(\mathcal{P}^{-}) + T_{i,j} + s_j. \\
  \end{array} \right.
\end{equation}

\textbf{Compact states $\mathbb{C}$.}
$\mathcal{P}_{V}$ and $Gain(\mathcal{P}_V)$ depend on the POI set of the route $\mathcal{P}$ but are independent of how the POIs are ordered. 
Hence, we group all open routes sharing the same $\mathcal{P}_{V}$ as a \emph{compact state}, denoted as $\mathbb{C}$, and let $\mathbb{C}_L$ denote the list of open routes having $\mathbb{C}$ as the POI set.
$\mathbb{C}$ is associated with the following fields:
\begin{equation}\label{eq:hash}
\left\{
	\begin{array}{l}
		Gain(\mathbb{C}): \mbox{the gain of routes grouped by $\mathbb{C}$}\\
		\mathbb{C}_{L}:    \forall \mathcal{P} \in \mathbb{C}_{L}, end(\mathcal{P}), cost(\mathcal{P}).  \\
	\end{array}
\right.
\end{equation}
These information is cached in a hash map with $\mathbb{C}$ as the key.

We assume that the POIs in $\mathcal{V}_{Q}$ are arranged in the lexicographical order of POI IDs. 
The compact states are enumerated as the subsets of $\mathcal{V}_{Q}$.
$x$ is included in every compact state, so we omit $x$.

Figure \ref{fig:tree} shows a compact state enumeration tree for $\mathcal{V}_{Q} = \{A, B, C, D\}$, excluding $x$ and $y$.
Each capital letter represents a POI, each node represents a compact state.
We define the set of POIs that precede $i$, in the above order, in a POI set as the \emph{prefix} of a POI $i$, e.g., prefix of $C$ is $\{AB\}$.
The compact states are generated in a specific \emph{prefix-first depth-first} manner so that longer open routes are extended from earlier computed shorter ones.
Initially, the root is the empty set $\emptyset$.
A child node $\mathbb{C}$ of the current node $\mathbb{C}^-$ is generated by appending a POI $i$ that precedes any POIs in $\mathbb{C}^-$ to the front of $\mathbb{C}^-$, and all child nodes are arranged by the order of $i$.
For example, Node 7 $\{ABC\}$ is generated as a child node of Node 6 $\{BC\}$ by appending $A$ to the front of $\{BC\}$ because $A$ precedes $B$ and $C$.

\begin{figure}[]%
        \centering
        \includegraphics[width=3.1in]{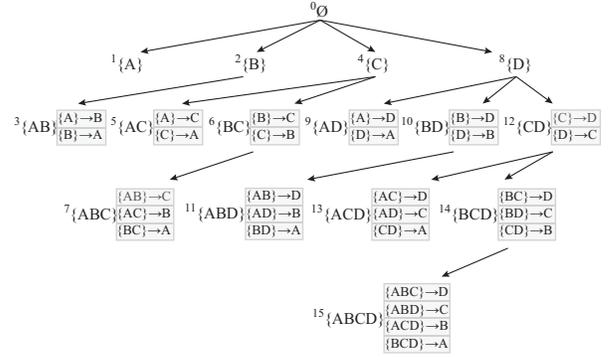}
        \caption{The compact state enumeration tree for \textbf{PACER}. 
        The number indicates the order of enumeration.
        }
        \label{fig:tree}
\end{figure}

At node $\mathbb{C}$, the open routes in $\mathbb{C}_L$ are generated by extending the cached routes in every compact state
$\mathbb{C}^{-j} = \mathbb{C} \setminus \{j\}$ where $j \in \mathbb{C}$. There are $|\mathbb{C}|$ such $\mathbb{C}^{-j}$.
We generate each route $\mathcal{P} = \mathcal{P}^{-} \rightarrow j$ by selecting the routes $\mathcal{P}^{-}$ from each $\mathbb{C}^{-j}_L$ and append $j$ at the end, and 
compute the gain and cost of $\mathcal{P}$ based on the accessed information for $\mathbb{C}^{-j}$ from the hash map. $\mathcal{P}$ is kept in $\mathbb{C}_L$ if it is feasible.

For example, to generate the open routes at the node $\{ABC\}$, we access the cached open routes at nodes $\{AB\}$, $\{AC\}$ and $\{BC\}$ and append the missing POI.
$\{AB\} \rightarrow C$ represents all the open routes ended with $C$ and the first two POIs in any order, i.e., $x\rightarrow A \rightarrow B \rightarrow C$ and $x\rightarrow B \rightarrow A\rightarrow C$. 
Note that it materializes only the current expanded branch of the tree, instead of the entire tree.

A closed route $\mathcal{P} \rightarrow y$ for each $\mathcal{P} \in \mathbb{C}_{L}$ is used to update the top-$k$ routes $topK$. If $\mathbb{C}_{L}$ is empty, this compact state is not kept. If no compact state is expandable, we stops the enumeration and yield the final $topK$.
Note that each $\mathbb{C}_{L}$ can include $|\mathbb{C}|!$ open routes and enumerating all the routes can be very expensive. We present two strategies to prune unpromising routes.

\subsection{Cost-based Pruning Strategy} \label{sec:feature_prune0}

Consider two feasible open routes $\mathcal{P}$ and $\mathcal{P}'$. We say $\mathcal{P}$ \emph{dominates} $\mathcal{P}'$ if $\mathcal{P}_V = \mathcal{P}'_V$, $end(\mathcal{P}) = end(\mathcal{P}')$ and $cost(\mathcal{P}) \leq cost(\mathcal{P}')$.
Because, if a route $\mathcal{P}'\rightarrow \hat{\mathcal{P}}$ is feasible, 
$\mathcal{P}\rightarrow \hat{\mathcal{P}}$ with the same extension $\hat{\mathcal{P}}$ must be also feasible and $cost(\mathcal{P}\rightarrow \hat{\mathcal{P}}) \leq cost(\mathcal{P}' \rightarrow \hat{\mathcal{P}})$.

\textbf{Pruning-1: cost dominance pruning.}
Leveraging the above dominance relationship, at the compact state $\mathbb{C}$, when generating $\mathcal{P} = \mathcal{P}^{-} \rightarrow j$ for a given $j$, we only select the open route $\mathcal{P}^{-}$ from $\mathbb{C}^{-j}_{L}$
such that $\mathcal{P}$ is feasible and has the least cost, thus, dominates all other routes  $\mathcal{P}'^{-} \rightarrow j$ with $\mathcal{P}'^{-}$ from $\mathbb{C}^{-j}_{L}$.
This reduces $|\mathbb{C}|!$ open routes to at most $|\mathbb{C}|$ dominating open routes at the compact state $\mathbb{C}$, one for each $j$ in $\mathbb{C}$, without affecting the optimality.
We call this strategy \emph{cost dominance pruning}.

For example, $\{AB\} \rightarrow C$ on node 7 $\{ABC\}$ now represents only one open route with the least cost chosen from $A \rightarrow B \rightarrow C$ and $B \rightarrow A \rightarrow C$. Note that Pruning-1 is a subtree pruning, e.g., if $A \rightarrow B$ on node 3 is pruned, all the open routes starting with $A \rightarrow B$, such as $A \rightarrow B \rightarrow C$ on node 7 and 
$A \rightarrow B \rightarrow D$ on node 11, will never be considered.

Though all dominated open routes are pruned, many of the remaining dominating open routes are still unpromising to lead to the top-$k$ closed routes. This further motivates our next strategy.

\subsection{Gain-based Pruning Strategy} \label{sec:feature_prune}

We can extend a dominating open route $\mathcal{P}$ step by step using the remaining budget $\Delta{b} = b - cost(\mathcal{P})$ into a closed route $\mathcal{P} \rightarrow \hat{\mathcal{P}}$. The POIs used for extension at each step should be reachable from the current $end(\mathcal{P})$, therefore, chosen from the set
\begin{equation}\label{eqn:U}
\mathcal{U} = \{i|T_{end(\mathcal{P}),i} + s_i + T_{i,y} + s_y \leq \Delta{b} \},
\end{equation}
where $i$ is an unvisited POI other than $y$. $T_{end(\mathcal{P}),i}$ and $T_{i,y}$ can be computed through $\mathsf{HI_Q}$. 
$\mathcal{P} \rightarrow \hat{\mathcal{P}}$ has gain $Gain(\mathcal{P}_V \cup \hat{\mathcal{P}}_V)$.
Then the \emph{marginal gain} by concatenating $\hat{\mathcal{P}}$ to the existing $\mathcal{P}$ is
\begin{equation} \label{eq:exp_gain}
\Delta{Gain(\hat{\mathcal{P}}_V | \mathcal{P}_V)} = Gain(\mathcal{P}_V \cup \hat{\mathcal{P}}_V) - Gain(\mathcal{P}_V).
\end{equation}
Let $\mathcal{P} \rightarrow \hat{\mathcal{P}}^*$ denote the $\mathcal{P} \rightarrow \hat{\mathcal{P}}$ with the highest gain. If $\mathcal{P} \rightarrow \hat{\mathcal{P}}^*$ ranks lower than the current $k$-th top routes $topK[k]$, $\mathcal{P}$ is not promising and all the open routes extended from $\mathcal{P}$ can be pruned.

\textbf{Pruning-2: marginal gain upper bound pruning}. However, finding $\hat{\mathcal{P}}^*$ is as hard as finding an optimal route from scratch, so we seek to estimate an \emph{upper bound} $UP$
of the marginal gain $\Delta{Gain(\hat{\mathcal{P}}_V | \mathcal{P}_V)}$, 
such that if $Gain(\mathcal{P}_V) + UP$ is less than
the gain of $topK[k]$,
$\mathcal{P}$ is not promising,
thus, $\mathcal{P}$ and all its extensions can be pruned without affecting the optimality.
We call this \emph{marginal gain upper bound pruning}.
As more routes are enumerated, the gain of $topK[k]$ increases and this pruning becomes more powerful.


The challenge of estimating $UP$ is to estimate the cost of the extended part $\hat{\mathcal{P}}$ without knowing the order of the POIs.
Because $\Delta{Gain(\hat{\mathcal{P}}_V | \mathcal{P}_V)}$ is independent of the POIs' order, 
we can ignore the order and approximate the ``route cost'' by a ``set cost'', i.e., 
the sum of some cost $c(i)$ of each POI $i \in \hat{\mathcal{P}}_V$, where $c(i)$ is 
no larger than $i$'s actual cost when it is included in $\hat{\mathcal{P}}$. We define 
$c(i)$ as:
\begin{equation}
c(i) = s_i + min(t_{j,i})/2 + min(t_{i,k})/2,
\end{equation}
where $t_{j,i}$ is the cost on an in-edge $e_{j,i}$ and $t_{i,k}$ is the cost on an out-edge $e_{i,k}$.
As the order of POIs is ignored, it is easy to verify that 
$min$ ensures the above property of $c(i)$.
The destination $y$ is ``one-sided'', i.e.,  $c(y)=s_y + min(t_{j,y})/2$.
To make a tighter cost approximation, we also count the half out-edge cost $min(t_{end(\mathcal{P}),k})/2$ for $end(\mathcal{P})$. 


Then, $UP$ is exact the solution, i.e., the maximum $\Delta{Gain(S^* | \mathcal{P}_V)}$, to the following optimization problem:
\begin{equation}\label{eq:set_diff_opt}
\begin{split}
\underset{ S \subseteq \mathcal{U} \cup \{y\}}{\max} \Delta{Gain(S | \mathcal{P}_V)} \
 s.t. \ {\sum}_{i \in S} c(i)  \leq B,
\end{split}
\end{equation}
where $\mathcal{U}$ is defined in Eqn. (\ref{eqn:U}) and $B = \Delta{b} - min(t_{end(\mathcal{P}),k})/2$. Note that $S$ should include $y$ because $end(\hat{\mathcal{P}}) = y$.
As $c(i)$ and $c(end(\mathcal{P}))$ are no larger than their actual costs,  $\Delta{Gain(S^* | \mathcal{P}_V)}\geq \Delta{Gain(\hat{\mathcal{P}}_V | \mathcal{P}_V)}$ for any $\hat{\mathcal{P}}$.
Thus, using $\Delta{Gain(S^* | \mathcal{P}_V)}$ as $UP$ never loses the optimality. 
To solve Eqn. (\ref{eq:set_diff_opt}), we first show the properties of the marginal gain function $\Delta{Gain}$.

%

\begin{thm} \label{thm:sub_increment}
The marginal gain function $\Delta{Gain}$ as defined in Eqn. (\ref{eq:exp_gain}) is nonnegative, monotone and submodular.
\end{thm}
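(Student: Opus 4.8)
The plan is to observe that $\Delta Gain$ is nothing but the \emph{residual} (marginal) of the function $Gain$ with respect to a fixed base set, and that forming such a residual preserves all three properties. Concretely, I would fix $A := \mathcal{P}_V$ and regard $\Delta Gain(\cdot \mid \mathcal{P}_V)$ as the single-argument set function $g(S) = Gain(A \cup S) - Gain(A)$ in the variable $S = \hat{\mathcal{P}}_V$, whose ground set is the collection of candidate POIs available for extension (those not already in $A$). By Theorem~\ref{thm:sub}, $Gain$ is itself nonnegative, monotone and submodular, and the whole argument will rely only on these three facts, not on the specific form of any $\Phi_h$.

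Nonnegativity and monotonicity would follow immediately from the monotonicity of $Gain$: for any $S$ we have $A \subseteq A \cup S$, hence $Gain(A \cup S) \ge Gain(A)$ and thus $g(S) \ge 0$; and for $S \subseteq T$ we have $A \cup S \subseteq A \cup T$, hence $g(S) = Gain(A \cup S) - Gain(A) \le Gain(A \cup T) - Gain(A) = g(T)$.

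For submodularity I would take arbitrary $X \subseteq Y$ and an element $v$ outside $Y$, and compute the two marginals of $g$ directly. The constant term $Gain(A)$ cancels, leaving $g(X \cup \{v\}) - g(X) = Gain(A \cup X \cup \{v\}) - Gain(A \cup X)$ and, likewise, $g(Y \cup \{v\}) - g(Y) = Gain(A \cup Y \cup \{v\}) - Gain(A \cup Y)$. Since $A \cup X \subseteq A \cup Y$, applying the submodularity of $Gain$ to the pair $A \cup X \subseteq A \cup Y$ with the adjoined element $v$ yields exactly $g(X \cup \{v\}) - g(X) \ge g(Y \cup \{v\}) - g(Y)$, which is the submodularity of $g$.

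The argument is essentially routine, so I do not expect a serious obstacle; the only point demanding a little care is the status of the adjoined element $v$. If $v \in A$, then $A \cup X \cup \{v\} = A \cup X$ and $A \cup Y \cup \{v\} = A \cup Y$, so both marginals vanish and the inequality holds trivially; if $v \notin A$ (the case that actually arises, since the extension POIs are disjoint from $\mathcal{P}_V$), then $v \notin A \cup Y$ and the submodularity of $Gain$ applies verbatim. I would therefore fix the ground set of $g$ to be $\mathcal{V} \setminus A$ explicitly to keep this case distinction clean, and remark that, since nothing beyond the three properties of $Gain$ is used, the result transfers to any nonnegative, monotone, submodular $\Phi_h$, in keeping with the paper's goal of a $\Phi_h$-agnostic pruning bound.
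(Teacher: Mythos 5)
Your proof is correct, and it rests on the same key observation as the paper's own proof: $\Delta Gain(\cdot \mid \mathcal{P}_V)$ is the \emph{residual} of $Gain$ at the fixed base set $A = \mathcal{P}_V$. The difference is in how that observation is discharged. The paper imports the residual lemma as a black box from \cite{krause2012submodular} (if $g$ is submodular and the base set is disjoint from the ground set, then $S \mapsto g(A \cup S) - g(A)$ is submodular), notes that $\mathcal{P}_V$ and $\mathcal{U}$ are disjoint, and stops there; it explicitly declines to argue nonnegativity and monotonicity ("we only show that $\Delta Gain$ is submodular"). You instead prove the lemma from first principles: the constant $Gain(A)$ cancels in both marginals, and submodularity of $Gain$ (Theorem~\ref{thm:sub}) applied to the pair $A \cup X \subseteq A \cup Y$ with adjoined element $v \notin A \cup Y$ yields exactly the required inequality, while nonnegativity and monotonicity follow in one line each from monotonicity of $Gain$. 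Your case distinction on whether $v \in A$ is the honest counterpart of the paper's disjointness hypothesis — fixing the ground set to $\mathcal{V} \setminus A$, as you do, is precisely what makes that case vacuous and is why the cited lemma demands disjointness. What your version buys is self-containedness and coverage of all three claimed properties rather than one; what the paper's buys is brevity. Both arguments are sound, and both are $\Phi_h$-agnostic, as the pruning bound requires.
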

\begin{proof}
We only show that
$\Delta{Gain}$ is submodular.
According to \cite{krause2012submodular}, if a set function $g:2^V \rightarrow \mathbb{R}$ is submodular, and $X,Y \subset V$ are disjoint, the \emph{residual} function $f: 2^Y \rightarrow \mathbb{R}$ defined as
$f(S) = g(X \cup S) - g(X)$ is also submodular.
Since $Gain$ is submodular (Theorem \ref{thm:sub}) and since $\mathcal{P}_V, \mathcal{U} \subset \mathcal{V}$ are disjoint, $\Delta{Gain(\hat{\mathcal{P}}_V | \mathcal{P}_V)} = Gain(\mathcal{P}_V \cup \hat{\mathcal{P}}_V) - Gain(\mathcal{P}_V)$ is residual on $\hat{\mathcal{P}}_V$, thus, is submodular.
\end{proof}

Apparently, Eqn. (\ref{eq:set_diff_opt}) is a submodular maximization problem subject to a knapsack constraint, which unfortunately is also NP-hard \cite{krause2012submodular}. 
Computing $\Delta{Gain(S^* | \mathcal{P}_V)}$ is costly, thus, we consider estimating its upper bound.

One approach, according to \cite{sviridenko2004note}, is to run a $\Omega(B|\mathcal{U}|^4)$ time ($B$ is defined in Eqn. (\ref{eq:set_diff_opt})) greedy algorithm in \cite{khuller1999budgeted} to obtain an approximate solution $\Delta{Gain(S'| \mathcal{P}_V)}$ for the above problem with approximation ratio of $1 - e^{-1}$, then the upper bound of $\Delta{Gain(S^* | \mathcal{P}_V)}$ is achieved by $ \Delta{Gain(S'| \mathcal{P}_V)} / (1 - e^{-1})$.
A less costly version of this algorithm runs in $\mathcal{O}(B|\mathcal{U}|)$ but its approximation ratio is $\frac{1}{2}(1 - e^{-1})$.


Compared with the above mentioned \emph{offline} bounds, i.e., $1 - e^{-1}$ and $\frac{1}{2}(1 - e^{-1})$ that are stated in advance before running the actual algorithm, the next theorem states that we can instead use the submodularity to acquire a much tighter \emph{online} bound.

\begin{thm} \label{th:online_bound}
For each POI $i \in \mathcal{U} \cup \{y\}$, let $\delta_i = \Delta{Gain(\{i\} | \mathcal{P}_V)}$. Let $r_i = \delta_i/c(i)$, and let $i_1, \cdots, i_m$ be the sequence of these POIs with $r_i$ in decreasing order.
Let $l$ be such that $C = {\sum}_{j=1}^{l-1}c(i_j)$ $\leq$ $B$ and ${\sum}_{j=1}^{l}c(i_j) > B$. Let $\lambda = (B-C)/c(i_l)$. Then
\begin{equation}\label{eq:upper_bound}
UP = {\sum}_{j=1}^{l-1}\delta_{i_j} + \lambda \delta_{i_l} \geq \Delta{Gain(S^* | \mathcal{P}_V)}.
\end{equation}
\end{thm}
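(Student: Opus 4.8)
The plan is to bound $\Delta{Gain(S^* | \mathcal{P}_V)}$ above by a linear (modular) function of the selected items, and then recognize $UP$ as the exact optimum of the fractional relaxation of the resulting knapsack problem. Throughout, write $f(S) = \Delta{Gain(S | \mathcal{P}_V)}$ for brevity. By Theorem \ref{thm:sub_increment} this $f$ is nonnegative, monotone and submodular, and in addition $f(\emptyset)=0$ (since $\Delta{Gain(\emptyset | \mathcal{P}_V)} = Gain(\mathcal{P}_V) - Gain(\mathcal{P}_V)=0$) and $f(\{i\})=\delta_i$.

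The first step I would carry out is the submodular linearization inequality $f(S) \le \sum_{i\in S}\delta_i$, valid for every $S \subseteq \mathcal{U}\cup\{y\}$. Order the elements of $S$ as $j_1,\dots,j_p$ arbitrarily and telescope $f(S) = \sum_{t=1}^{p}\bigl(f(\{j_1,\dots,j_t\}) - f(\{j_1,\dots,j_{t-1}\})\bigr)$. Submodularity implies that the marginal gain of adding $j_t$ to the base set $\{j_1,\dots,j_{t-1}\}$ is at most the marginal of adding $j_t$ to $\emptyset$, namely $f(\{j_t\})=\delta_{j_t}$. Summing these bounds gives $f(S)\le \sum_{i\in S}\delta_i$. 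Applying this to the optimizer $S^*$ yields $\Delta{Gain(S^* | \mathcal{P}_V)} \le \sum_{i\in S^*}\delta_i$.

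The second step compares $\sum_{i\in S^*}\delta_i$ against $UP$. Since $S^*$ is feasible for Eqn. (\ref{eq:set_diff_opt}), we have $\sum_{i\in S^*}c(i)\le B$, so $S^*$ is a feasible integral solution of the $0/1$ knapsack instance with item values $\delta_i$, item weights $c(i)$, and capacity $B$. Hence $\sum_{i\in S^*}\delta_i$ is at most the integral knapsack optimum, which is at most the optimum of its fractional relaxation (the relaxation has a larger feasible region, so its maximum is no smaller). The classical fractional-knapsack fact is that this relaxed optimum is attained greedily: sort items by density $r_i=\delta_i/c(i)$ in decreasing order, take whole items while budget permits, and take the maximal feasible fraction of the first item that no longer fits. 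That value is exactly $\sum_{j=1}^{l-1}\delta_{i_j}+\lambda\,\delta_{i_l}=UP$, with $l$ and $\lambda=(B-C)/c(i_l)$ as in the statement. Chaining the inequalities gives $\Delta{Gain(S^* | \mathcal{P}_V)} \le \sum_{i\in S^*}\delta_i \le UP$, which is the claim.

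The main obstacle is the first step rather than the knapsack bookkeeping: the estimate $UP$ is useful for pruning only because the linearization $f(S)\le\sum_{i\in S}\delta_i$ replaces the intractable submodular objective by a modular one while remaining tight, so I would take care to invoke submodularity precisely in the telescoping form above (each marginal compared to the singleton value) rather than through a weaker pairwise argument. The remaining steps need only that $\delta_i\ge 0$ and $c(i)>0$, which hold by Theorem \ref{thm:sub_increment} and the definition of $c(i)$, and so are routine.
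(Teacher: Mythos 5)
Your proof is correct, but it takes a genuinely different route from the paper's. The paper does not derive the bound from first principles at all: its proof is a two-line appeal to the online-bound theorem of \cite{leskovec2007cost} for constrained submodular maximization, instantiated on the problem in Eqn.~(\ref{eq:set_diff_opt}) with the candidate solution $\hat{\mathcal{A}} = \emptyset$, from which Theorem~\ref{th:online_bound} is read off. What you have written is essentially a self-contained re-derivation of that cited result in this special case: first the telescoping linearization $\Delta Gain(S \,|\, \mathcal{P}_V) \leq \sum_{i \in S} \delta_i$, which is valid precisely because $\Delta Gain(\emptyset \,|\, \mathcal{P}_V) = 0$ and submodularity caps each marginal by its singleton value; then the observation that $\sum_{i \in S^*} \delta_i$ is dominated by the optimum of the fractional (Dantzig) knapsack relaxation, which the density-ordered greedy computes exactly and which equals $UP$. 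The paper's citation buys brevity and the more general form of the bound (an arbitrary nonempty $\hat{\mathcal{A}}$ would yield a tighter bound than $\hat{\mathcal{A}}=\emptyset$, if a partial solution were available), while your derivation buys transparency: it isolates exactly which hypotheses are consumed at each step (submodularity and $f(\emptyset)=0$ for the linearization; $\delta_i \geq 0$, guaranteed by monotonicity, and $c(i) > 0$ for the knapsack step) and makes the theorem verifiable without consulting the reference. One caveat, shared with the theorem statement itself and hence not a gap in your argument: both implicitly assume an index $l$ with $\sum_{j=1}^{l} c(i_j) > B$ exists; in the degenerate case where all of $\mathcal{U} \cup \{y\}$ fits within budget $B$, the bound should simply be $\sum_{j=1}^{m} \delta_{i_j}$, and both proofs go through with that reading.
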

\begin{proof}
\cite{leskovec2007cost} showed a theorem that a tight online bound for arbitrary given solution $\hat{\mathcal{A}}$ (obtained using any algorithm) to a constrained submodular maximization problem can be got to measure how far $\hat{\mathcal{A}}$ is from the optimal solution.
By applying \cite{leskovec2007cost} to the problem in Eqn. (\ref{eq:set_diff_opt}) and let $\hat{\mathcal{A}} = \emptyset$, Theorem \ref{th:online_bound} is deduced.
\end{proof}

By this means, $UP$ is computed without running a greedy algorithm. We also empirically proved that this online bound in Eqn. (\ref{eq:upper_bound}) outperforms the offline bounds on both tightness and computational cost.
Thus, we finally choose the online bound.

\subsection{Algorithm}\label{sec:alg}




\begin{algorithm}[t]
\SetKwInOut{Input}{Parameters}
\SetKwInOut{Glb}{Globals}
\SetKwInOut{Output}{Output}
\SetKwFunction{Check}{Check}
\SetKwFunction{PACER}{PACER}
\SetKwFunction{Sat}{Sat}
\SetKwFunction{Reach}{Reach}
\SetKwFunction{UpdState}{UpdateState}
\SetKwFunction{Find}{NonDominatedRoute}
\SetKwFunction{UpdK}{UpdateTopK}
\SetKwFunction{Pre}{Pref}
\SetKwFunction{Upper}{UpperBoundMarginal}

	\Glb {$Q = (x,y, b,\mathbf{w},\boldsymbol{\theta}, \Phi)$, $\mathcal{V}_Q$, $\mathsf{FI_Q}$ and $\mathsf{HI_Q}$ to compute $Gain(\mathbb{C})$ and $cost(\mathcal{P})$, and $k$}
	\Input {compact state $\mathbb{C}^{-}$ and the set of POIs $I$ 
	}
	\Output {a priority queue $topK$}

	\ForAll{POI $i$ in set $I$ in order}
	{ \label{li:for_v}
		$\mathbb{C} \leftarrow$  $\{i\}$ $\cup$ $\mathbb{C}^{-}$;\\ \label{li:pre_v}
		compute $Gain(\mathbb{C})$; \\ \label{li:com_gain}
		\ForAll{POI $j$ in $\mathbb{C}$}
	 	{ \label{li:sch_2}
	 		$\mathbb{C}^{-j} \leftarrow$ $\mathbb{C} \setminus \{j\}$; \\ \label{li:key_2}
	 		$\mathcal{P}^{-} \leftarrow$ the dominating route in $\mathbb{C}^{-j}_{L}$ such that $cost(\mathcal{P}^{-} \rightarrow j)$ is minimum; \ \tcp*[h]{prune-1} \\ \label{li:find}
			$\mathcal{P} \leftarrow \mathcal{P}^{-} \rightarrow j$; \\
            \If{$cost(\mathcal{P} \rightarrow y) \leq b$}
            {\label{li:sat}
				Compute $UP$ using Eqn. (\ref{eq:upper_bound}); \\ \label{li:upper}
	 			\If{$Gain(\mathbb{C}) + UP \geq Gain(topK[k])$}
	 			{
	 				insert route $\mathcal{P}$ into $\mathbb{C}_{L}$; \ \tcp*[h]{prune-2} \\ \label{li:insert}
	 			}
            }		 		
	 	}
	 	$\UpdK(\mathbb{C}_{L}$, $topK)$;  \\ \label{li:opt_p}
	 	$\PACER(\mathbb{C}, $ prefix of $i$ in $I)$; \\ \label{li:rec_2}
	}

\caption{\texttt{PACER}($\mathbb{C^{-}}, I$) (Recursive funcion)}\label{algo:2}
\end{algorithm}

Algorithm \ref{algo:2} incorporates the above enumeration and pruning strategies.
Given the global variables, \texttt{PACER($\mathbb{C}^{-},I$)} recursively enumerates the subtree at the current compact state $\mathbb{C}^{-}$ with the POI set $I$ available for extending $\mathbb{C}^{-}$, and finally return the $k$ best routes in $topK$.
The initial call is \texttt{PACER($\emptyset,\mathcal{V}_{Q}$)}, when only $x$ is included.

As explained in Section \ref{sec:enumeration}, Line \ref{li:for_v} - \ref{li:com_gain} extends $\mathbb{C}^{-}$ by each $i$ in the set $I$ in order, creating the child node $\mathbb{C}$ and computing $Gain(\mathbb{C})$.
Lines \ref{li:sch_2} - \ref{li:insert} generate the dominating and promising open routes $\mathbb{C}_{L}$.
Specifically, for each $j \in \mathbb{C}$ selected as the ending POI,
Line \ref{li:key_2} - \ref{li:find} find the dominating route $\mathcal{P}^{-}$ from the previously computed $\mathbb{C}^{-j}_{L}$.
This corresponds to Pruning-1.
Only when the new open route $\mathcal{P}$ is feasible,
Pruning-2 is applied to check 
if $\mathcal{P}$ has a promising gain, and if so, $\mathcal{P}$ is inserted into $\mathbb{C}_{L}$ (Lines \ref{li:sat} - \ref{li:insert}).
After $\mathbb{C}_{L}$ is finalized, it selects an open route $\mathcal{P}$ in $\mathbb{C}_{L}$ such that $\mathcal{P} \rightarrow y$ has the least cost to update $topK$ (Line \ref{li:opt_p}).
The information of the new compact state $\mathbb{C}$, as in Eqn. (\ref{eq:hash}), is added to the hash map.
At last, $\mathbb{C}$ is extended recursively with the POIs
in the prefix of $i$ in current $I$ (Line \ref{li:rec_2}).


\textbf{Summary of the properties of PACER.} (1) PACER works for \textbf{any} nonnegative, monotone, and submodular $Gain$ function so as to deal with the personalized diversity requirement.
(2) Open routes are enumerated as compact states in a prefix-first depth-first order to construct open routes incrementally, i.e., \textbf{dynamic programming}. (3) With \textbf{Pruning-1}, we compute at most $|\mathbb{C}|$ dominating feasible open routes at each compact state $\mathbb{C}$, instead of $|\mathbb{C}|!$ routes.
(4) \textbf{Pruning-2} further weeds out the dominating feasible open routes not having a promising estimated maximum gain. 

\subsection{Complexity Analysis} \label{sec:complex}

We measure the computational complexity by the number of routes examined.
Two main factors affecting this measure are the size of the POI candidate set, i.e., $|\mathcal{V}_Q|$ denoted by $n$, and the maximum length of routes examined (excluding $x$ and $y$), i.e., the maximum $|\mathcal{P}|$ denoted by $p$.
$p \ll n$.
We analyze PACER relatively to
the brute-force search and a state-of-the-art approximation solution.

\textbf{PACER.} The compact states on the $l$-th level
of the enumeration tree (Figure \ref{fig:tree}) compute the routes containing $l$ POIs; thus, there are at most ${n \choose l}$ compact states on level $l$. And thanks to Pruning-1, each compact state represents at most $l$ dominating open routes. 
There are $n$ dominating open routes with single POI on level $l = 1$.
Starting from $l = 2$, to generate each dominating open route on level $l$, we need to examine $(l-1)$ sub-routes having the same set of POIs as the prefix and add the same ending to find the dominating one, according to the cost dominance pruning. 
Hence, with $p \ll n$ and the Pascal's rule \cite{burton2006elementary}, the number of routes examined is at most
\begin{equation} \label{eq:comp_pacer}
\begin{split}
& n + \sum_{l=2}^{p} l(l-1){n \choose l} = n + n(n-1) \sum_{l=2}^{p}{n-2 \choose l-2} \approx n(n-1)({n-2 \choose p-2}   \\
& +  {n-2 \choose p-3}) = n(n-1) {n-1 \choose p-2} = \frac{n-1}{(n-p+1)(p-2)!} \frac{n!}{(n-p)!}.
\end{split}
\end{equation}
Therefore, the computation cost of PACER is $\mathcal{O}(\frac{1}{(p-2)!} \frac{n!}{(n-p)!})$ with $p \ll n$. If Pruning-2 is also enabled and it prunes the $\gamma$ percent of the routes examined by PACER with Pruning-1, the computation cost of PACER is $\mathcal{O}((1 - \gamma) \frac{1}{(p-2)!} \frac{n!}{(n-p)!})$.

\textbf{Brute-force algorithm (BF).} The brute-force algorithm based on the breadth-first expansion examines $\mathcal{O}(\frac{n!}{(n-p)!})$ routes.

\textbf{Approximation algorithm (AP).}
\cite{chekuri2005recursive} proposed a \emph{quasi-polynomial time} approximation algorithm for the Orienteering Problem. We modified AP
to solve our problem.
It uses a recursive binary search to 
produce a single route with the approximation ratio $\lceil \log p\rceil + 1$ and runs in 
$\mathcal{O}((n \cdot OPT \cdot \log b)^{\log p})$, where $OPT$ and $b$ are the numbers of discrete value for an estimated optimal Gain and for the budget, respectively.
The cost is expensive if $b$ or $OPT$ has many discrete values.
For example, for $b = 512$ minutes, $n = 50$, $p = 8$
and $OPT = 10.0$ (100 discrete values with the single decimal point precision),
the computation cost 
is $(50 \cdot 100 \cdot \log 512)^{\log 8} = 9.11 \times 10^{13}$.
Lower precision leads to smaller computation cost, but also lower accuracy.
\cite{singh2007efficient} noted that AP took more than $10^4$ seconds for a small graph with 22 nodes.
Compared with AP, the computation cost of PACER with Pruning-1 given by Eqn. (\ref{eq:comp_pacer}) is only $50 \times 49 \times {49 \choose 6} = 3.43 \times 10^{10}$.
This cost is further reduced by Pruning-2.
PACER finds the optimal top-$k$ routes whereas AP only finds single approximate solution.
We will experimentally compare PACER with AP.

\section{Heuristic Solutions} \label{sec:heu}
PACER remains expensive for a large  budget $b$ and a large POI candidate set $\mathcal{V}_Q$.
The above approximation algorithm is not scalable.
Hence, we design two heuristics when such extreme cases arise.

\textbf{State collapse heuristic.}
The cost dominance pruning in PACER keeps at most $l$ open routes for a compact state representing a set of $l$ POIs (excluding $x$ and $y$). A more aggressive pruning is to keep only one open route having the least cost at each compact state, with the heuristic that this route likely visits more POIs. 
We denote this heuristic algorithm by \textbf{PACER-SC}, where SC stands for ``State Collapsing''.
Clearly, PACER-SC trades optimality for efficiency, but it inherits many nice properties from PACER and Section \ref{sec:perfomance} will show that it usually produces $k$ routes with quite good quality.

Analogous to the complexity analysis for PACER in Section \ref{sec:complex}, with $p \ll n$, PACER-SC examines no more than 
$\sum_{l=1}^{p} l{n \choose l} \approx n{n \choose p-1}$ routes, which is around $1/p$ of that for PACER. 

\textbf{Greedy algorithm.}
PACER-SC's computation complexity  remains exponential in the route length $p$.
Our next greedy algorithm runs in polynomial time. It starts with the initial route $x \rightarrow y$ and iteratively inserts an unvisited POI $i$ to the current route to maximize the  marginal gain/cost ratio
\begin{equation} \label{eq:gr_ratio}
\frac{Gain(\{i\} \cup \mathbb{C}) - Gain(\mathbb{C})}{s_i + T_{x,i} + T_{i,y}},
\end{equation}
where $\mathbb{C}$ denotes the set of POIs on the current route. It inserts $i$ between two adjacent POIs in the current route so that the total cost of the resulting route is minimized. The term $T_{x,i} + T_{i,y}$ constrains the selected POIs $i$ to be those not too far away from the two end points. The expansion process is repeated until the budget $b$ is used up. The algorithm only produces a single route and examines
$\mathcal{O}(pn)$ routes because each insertion will consider at most $n$ unvisited POIs.

\section{Experimental Evaluation} \label{se:exp}

All algorithms were implemented in C++ and were run on Ubuntu 16.04.1 LTS with Intel i7-3770 CPU @ 3.40 GHz and 16G of RAM. 

\subsection{Experimental Setup}
\subsubsection{Datasets}
We use two real-world datasets from \cite{zeng2015optimal}.
\textsl{Singapore} denotes the Foursquare check-in data collected in Singapore, and \textsl{Austin} denotes the Gowalla check-in data collected in Austin.
\textsl{Singapore} has 189,306 check-ins at 5,412 locations by 2,321 users, and \textsl{Austin} has 201,525 check-ins at 6,176 locations by 4,630 users. Same as suggested in \cite{cao2012keyword, zeng2015optimal}, we built an edge between two locations if they were visited on the same date by the same user. 
The locations not connected by edges were ignored.
We filled in the edge costs $t_{i,j}$ by querying the traveling time in minute using Google Maps API
under driving mode. 
The staying time $s_i$ were generated
following the Gaussian distribution, $s_i \sim \mathcal{N}(\mu,\sigma^{2})$, with $\mu = 90$ minutes and $\sigma = 15$.
The features are extracted based on the user mentioned keywords at check-ins, same to \cite{zeng2015optimal}.
We obtain the rating of a feature $h$ on POI $i$ by
\begin{equation} \label{eq:agg_rating}
\mathbf{F}_{i,h} = \min\{\frac{NC_{h}(i)}{1/|S_h| \times \sum_{j \in S_h} NC_{h}(j)} \times \frac{\beta}{2}, \beta\},
\end{equation}
where $NC_{h}(i)$ is the number of check-ins at POI $i$ containing the feature $h$, $S_h$ is the set of POIs containing $h$, $\beta$ is the maximum feature rating and is set to $\beta = 5$ for both data sets.
%
The calculation scales the middle value $\frac{\beta}{2}$ by the ratio of a POI's 
check-in count to the average check-in count on $h$.
Table \ref{tab:data} shows the descriptive statistics of the datasets after the above preprocessing.

\begin{table}[h]
\centering
\caption{Dataset statistics}
\label{tab:data}
\small
\begin{tabular}{cccccc}
\toprule
         & \textbf{\# POI} & \textbf{\# Edges} & \textbf{Average $t_{i,j}$}  & \textbf{\# Features}
         \\ \toprule
\textsl{Singapore} 	& 1,625		& 24,969	& 16.24 minutes		& 202\\ 
\textsl{Austin} 	& 2,609		& 34,340	& 11.12 minutes		& 252\\ \bottomrule
\end{tabular}
\vspace{-3pt}
\end{table}

Both datasets were used in \cite{zeng2015optimal}, which also studied a route planning problem. The datasets are not small considering the scenario for a daily trip in a city
where the user 
has a limited cost budget.
Even with 150 POIs to choose from, the number of possible routes consisting of 5 POIs can reach 70 billions.
Compared to our work, \cite{lim2017personalized} evaluated its itinerary recommendation methods using theme park data, where each park contains only 20 to 30 attractions.

\subsubsection{Algorithms} \label{sec:algos}
We compared the following algorithms. \textbf{BF} is the brute-force method (Section \ref{sec:complex}). \textbf{PACER+1} is our proposed optimal algorithm with only Pruning-1
enabled. \textbf{PACER+2} enables both Pruning-1 and Pruning-2. 
\textbf{PACER-SC} is the state collapse algorithm and \textbf{GR} is  the greedy algorithm in Section \ref{sec:heu}.
\textbf{AP} is the approximation algorithm proposed by \cite{chekuri2005recursive} (see Section \ref{sec:complex}).
\textbf{A*} is the A* algorithm proposed by \cite{zeng2015optimal}. Since
A* works only for its specific keyword coverage function, 
it is not compared until Section \ref{sec:Astar} where we adapt  their coverage function in our method.
To be fair, all algorithms use the indices in Section \ref{sec:overview} to speed up. Note that BF, PACER+1, PACER+2 and A* 
are exact algorithms, while PACER-SC, GR, and AP 
are greedy or approximation algorithms.

\subsubsection{Queries}
A query $Q$ has the six parameters $x,y, b,\mathbf{w},$ $\boldsymbol{\theta}, \Phi$.
For concreteness, we choose $\Phi_h$ in
Eqn. \eqref{eq:gain_power_2} with $\boldsymbol{\alpha}$ controlling the diversity of POIs on a desired route.
We assume $\boldsymbol{\theta}_h$ and $\boldsymbol{\alpha}_h$ are the same for all features $h$.
For \textsl{Singapore}, we set $x$ as Singapore Zoo and $y$ as Nanyang Technological University; 
and for \textsl{Austin}, we set $x$ as UT Austin and $y$ as Four Seasons Hotel Austin.

For each dataset, we generated 50 weight vectors $\mathbf{w}$ to model the feature preferences of 50 users as follows. For each $\mathbf{w}$, we draw $m$ features, where $m$ is a random integer in $[1,4]$, and the probability of selecting each feature $h$ is 
$\mathbf{Pr}(h) = \frac{\sum_{i \in S_h} NC_{h}(i)}{\sum_{h \in \mathcal{H}}\sum_{i \in S_h} NC_{h}(i)}$.
$NC_{h}(i)$ and $S_h$ are defined in Eqn. (\ref{eq:agg_rating}). 
Let $\mathcal{H}_Q$ be the set of selected features. For each $h \in \mathcal{H}_Q$, $\mathbf{w}_h = \frac{\sum_{i \in S_h} NC_{h}(i)}{\sum_{h \in \mathcal{H}_Q}\sum_{i \in S_h} NC_{h}(i)}$.

Finally, we consider $b \in \{4,5,\textbf{6},7,8,9\}$ in hours, $\boldsymbol{\theta} \in \{0, 1.25, \textbf{2.5},$ $3.75\}$, and $\boldsymbol{\alpha} \in \{0, \textbf{0.5}, 1, 2\}$ with the default settings in bold face. For each setting of $b, \boldsymbol{\theta}, \boldsymbol{\alpha}$, we generated 50 queries $Q=(x,y, b,\mathbf{w}, \boldsymbol{\theta}, \boldsymbol{\alpha})$ using the 50 vectors $\mathbf{w}$ above.
All costs are in minutes, therefore, $b=5$ specifies the budget of 300 minutes.



We first 
evaluate the performance of our proposed algorithms (Section \ref{sec:perfomance}), then we compare with the A* algorithm (Section \ref{sec:Astar}).

\subsection{Performance Study} \label{sec:perfomance}
\textbf{Evaluation metrics.}
As we solve an optimization problem, we evaluate \emph{Gain} for effectiveness, \emph{CPU runtime} and \emph{search space} (in the number of examined open routes) for efficiency.

For every algorithm, we evaluate the three metrics for processing a query, and
report the average for the 50 queries (i.e., vectors $\mathbf{w}$) under each setting of $(b,\boldsymbol{\theta},\boldsymbol{\alpha})$ chosen from the above ranges. 
GR and AP only find single route, thus, we first set $k=1$ to compare all algorithms, 
and discuss the impact of larger $k$ in Section \ref{sec:impact_k}.


Figures \ref{fig:sg} and \ref{fig:as} report the experiments for \textsl{Singapore} and \textsl{Austin}, respectively.
Each row corresponds to various settings of one of $b, \boldsymbol{\theta}, \boldsymbol{\alpha}$ while fixing the other two at the default settings.
OPTIMAL denotes the same optimal gain of PACER+2, PACER+1 and BF. 
We terminated an algorithm for a given query after
it runs for 1 hour or runs out of memory, and used the label 
beside a data point to indicate the percentage of finished queries.
If more than a half of the queries were terminated, no data point is shown.



\begin{figure}[t]  
        \hspace{0.2in}
        \begin{subfigure}[b]{0.25\textwidth}
                \includegraphics[width=\textwidth]{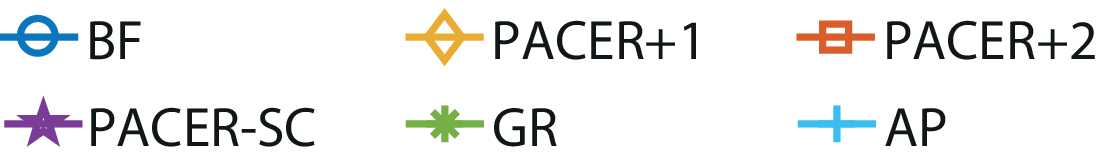}
        \end{subfigure}%
        ~ 
        \hspace{0.25in}
        \begin{subfigure}[b]{0.15\textwidth}
                \includegraphics[width=\textwidth]{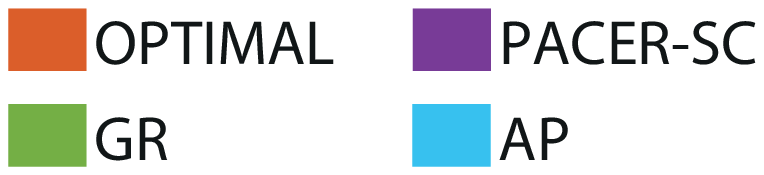}
        \end{subfigure}%

        \begin{subfigure}[b]{0.165\textwidth}
                \includegraphics[width=\textwidth]{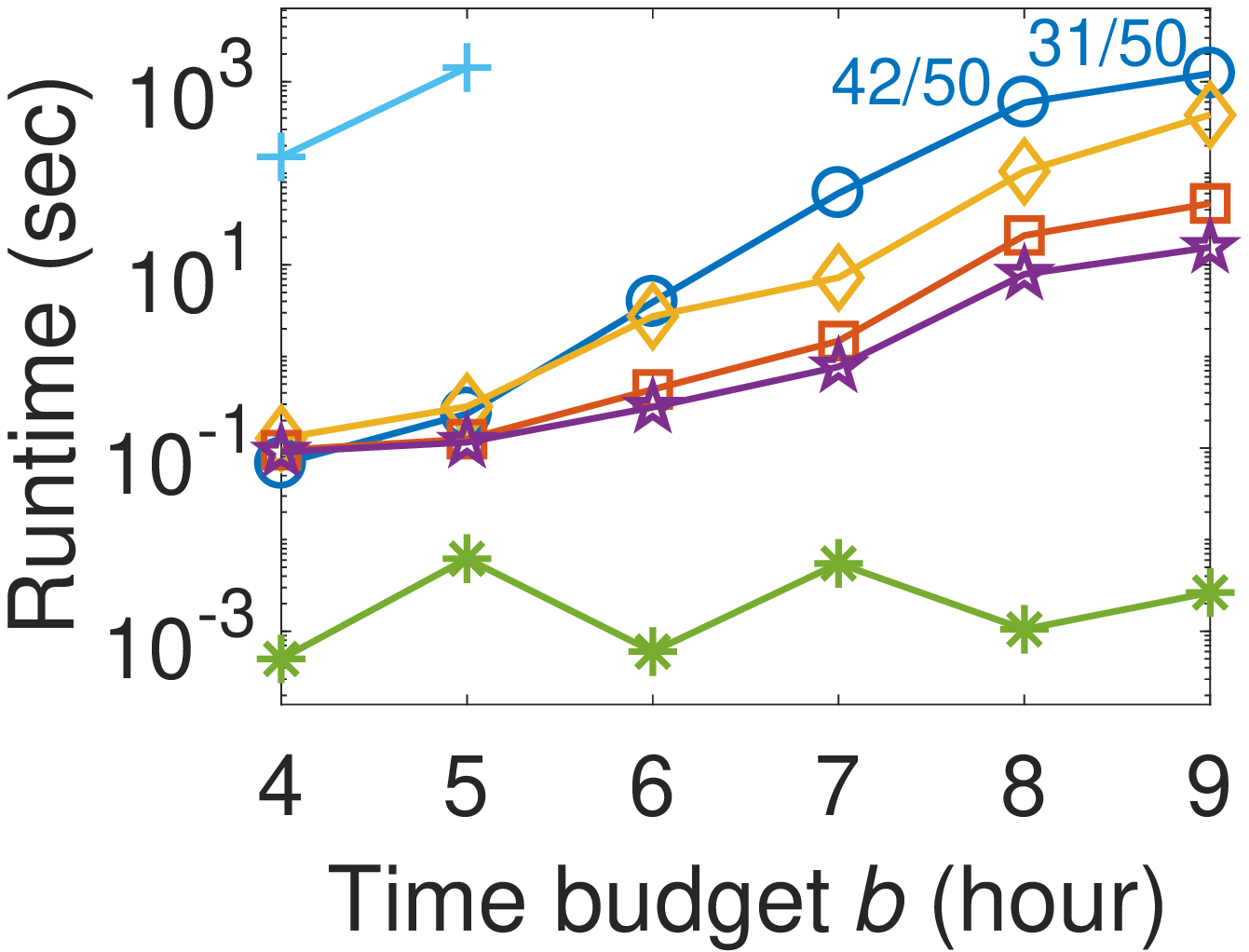}
                \caption{Runtime vs. $b$}
                \label{fig:sg_b_t}
        \end{subfigure}%
        ~ 
        \hspace{-0.15in}
        \begin{subfigure}[b]{0.165\textwidth}
                \includegraphics[width=\textwidth]{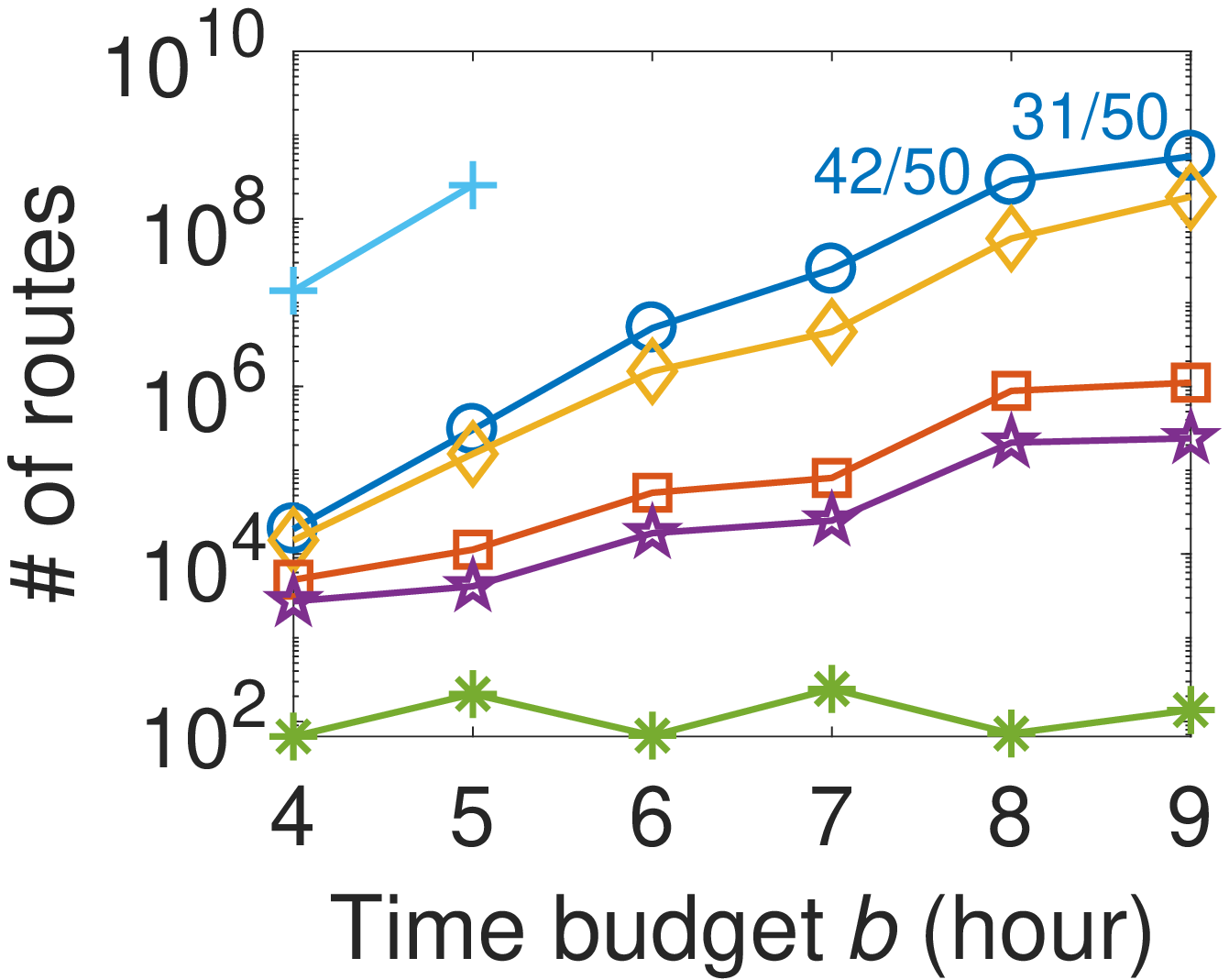}
                \caption{\# of routes vs. $b$}
                \label{fig:sg_b_s}
        \end{subfigure}
        ~ 
        \hspace{-0.15in}
        \begin{subfigure}[b]{0.165\textwidth}
                \includegraphics[width=\textwidth]{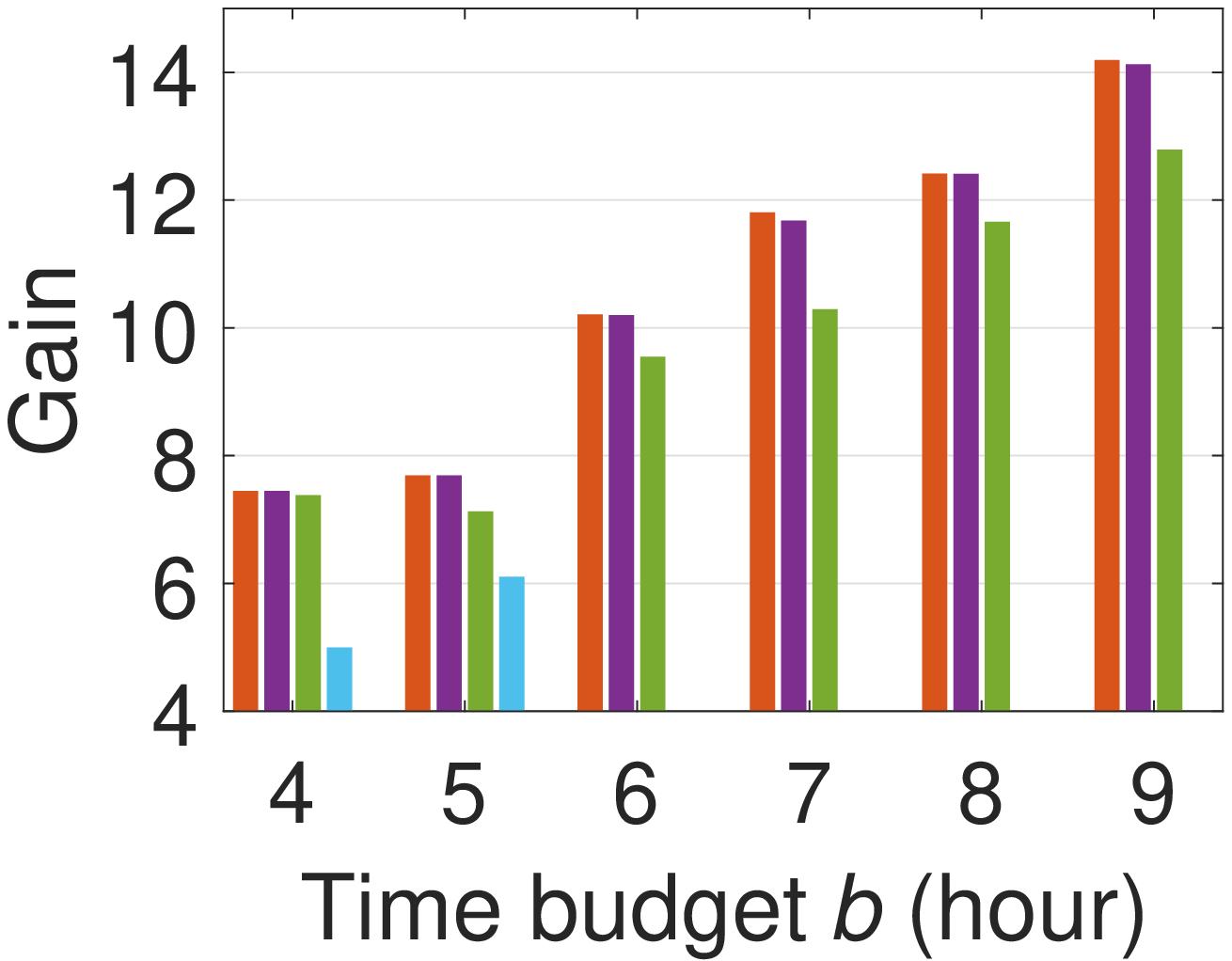}
                \caption{Gain vs. $b$}
                \label{fig:sg_b_g}
        \end{subfigure}

        \begin{subfigure}[b]{0.165\textwidth}
                \includegraphics[width=\textwidth]{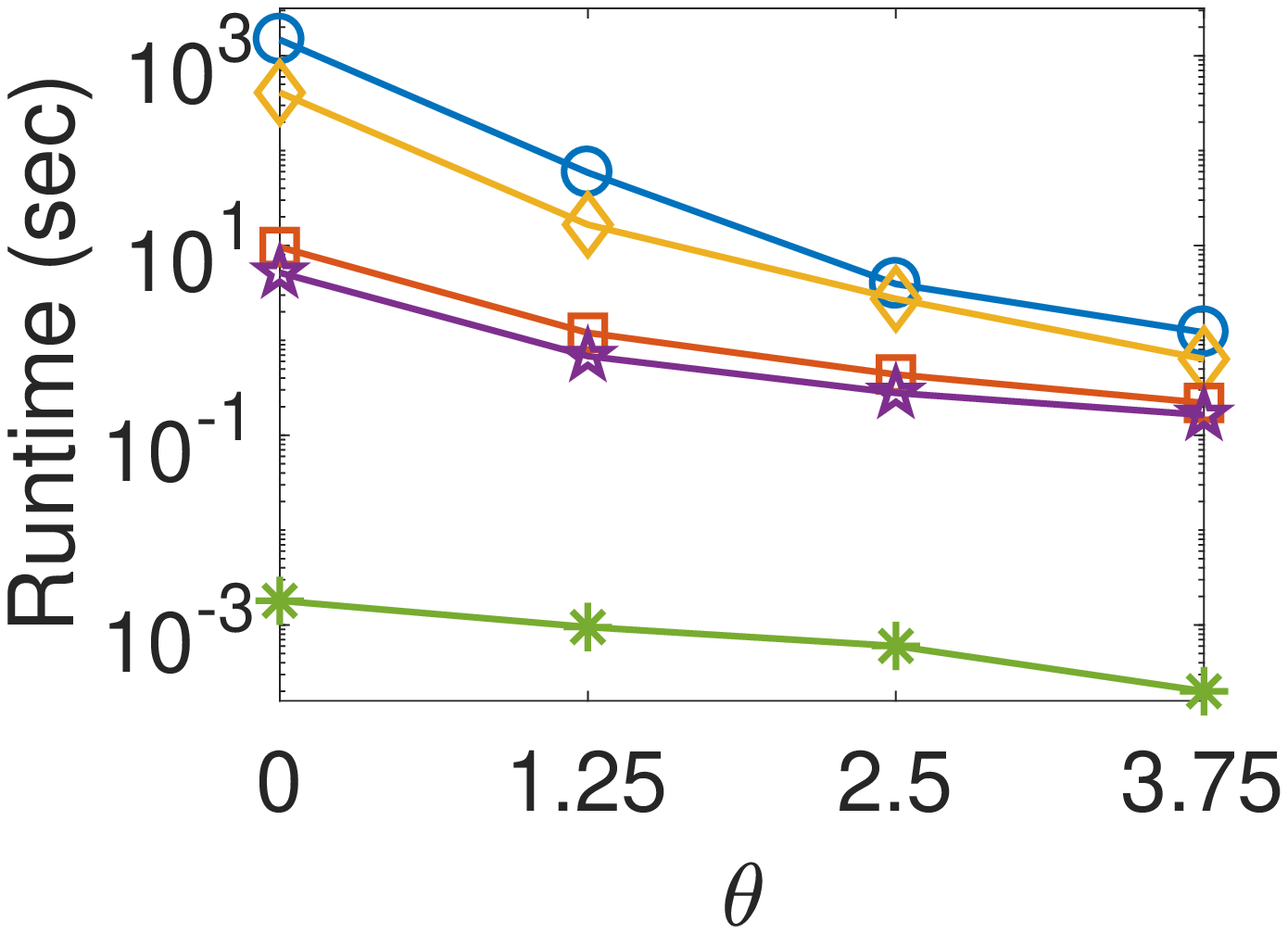}
                \caption{Runtime vs. $\boldsymbol{\theta}$}
                \label{fig:sg_theta_t}
        \end{subfigure}%
        ~ 
        \hspace{-0.15in}
        \begin{subfigure}[b]{0.165\textwidth}
                \includegraphics[width=\textwidth]{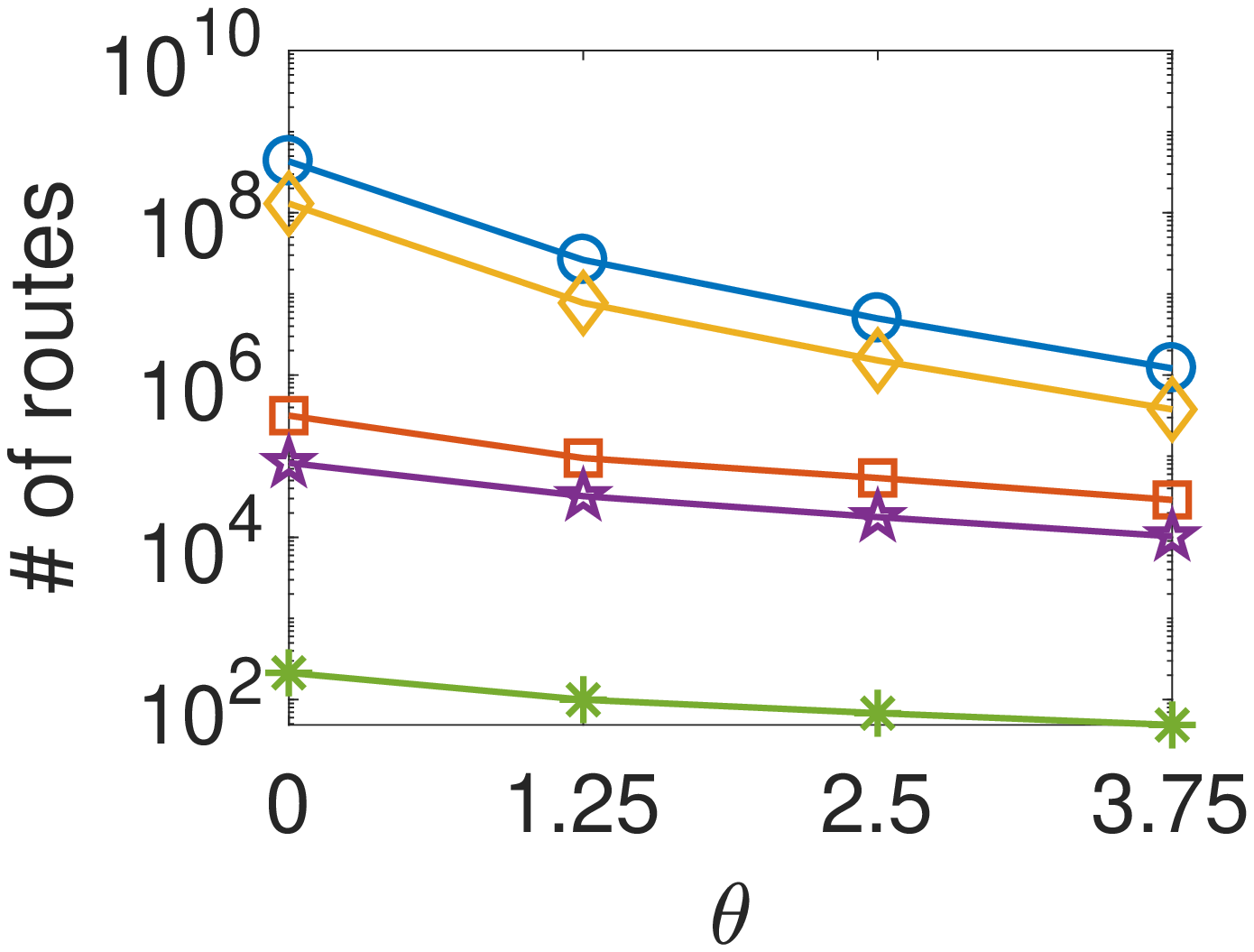}
                \caption{\# of routes vs. $\boldsymbol{\theta}$}
                \label{fig:sg_theta_s}
        \end{subfigure}
        ~ 
        \hspace{-0.15in}
        \begin{subfigure}[b]{0.165\textwidth}
                \includegraphics[width=\textwidth]{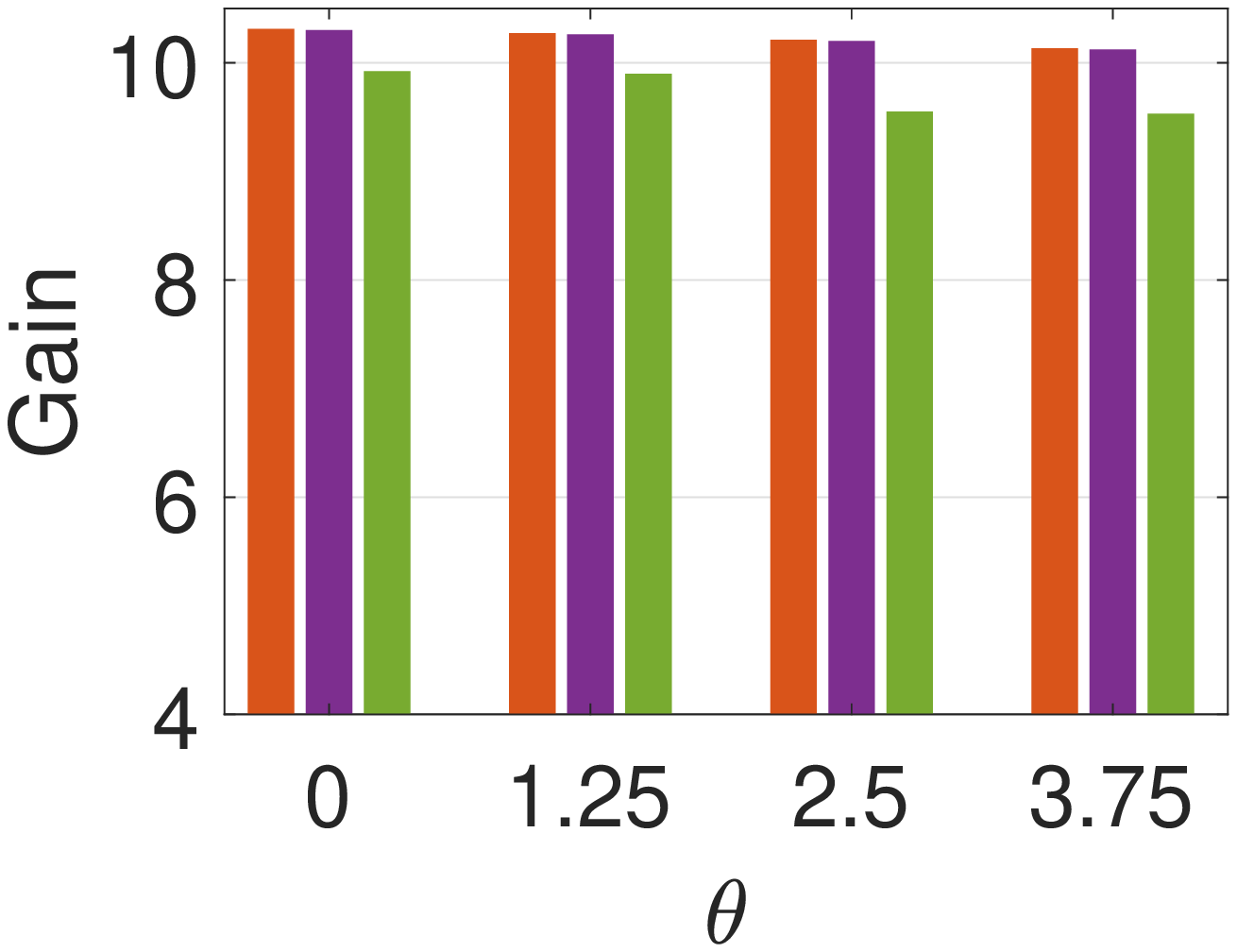}
                \caption{Gain vs. $\boldsymbol{\theta}$}
                \label{fig:sg_theta_g}
        \end{subfigure}

        \begin{subfigure}[b]{0.165\textwidth}
                \includegraphics[width=\textwidth]{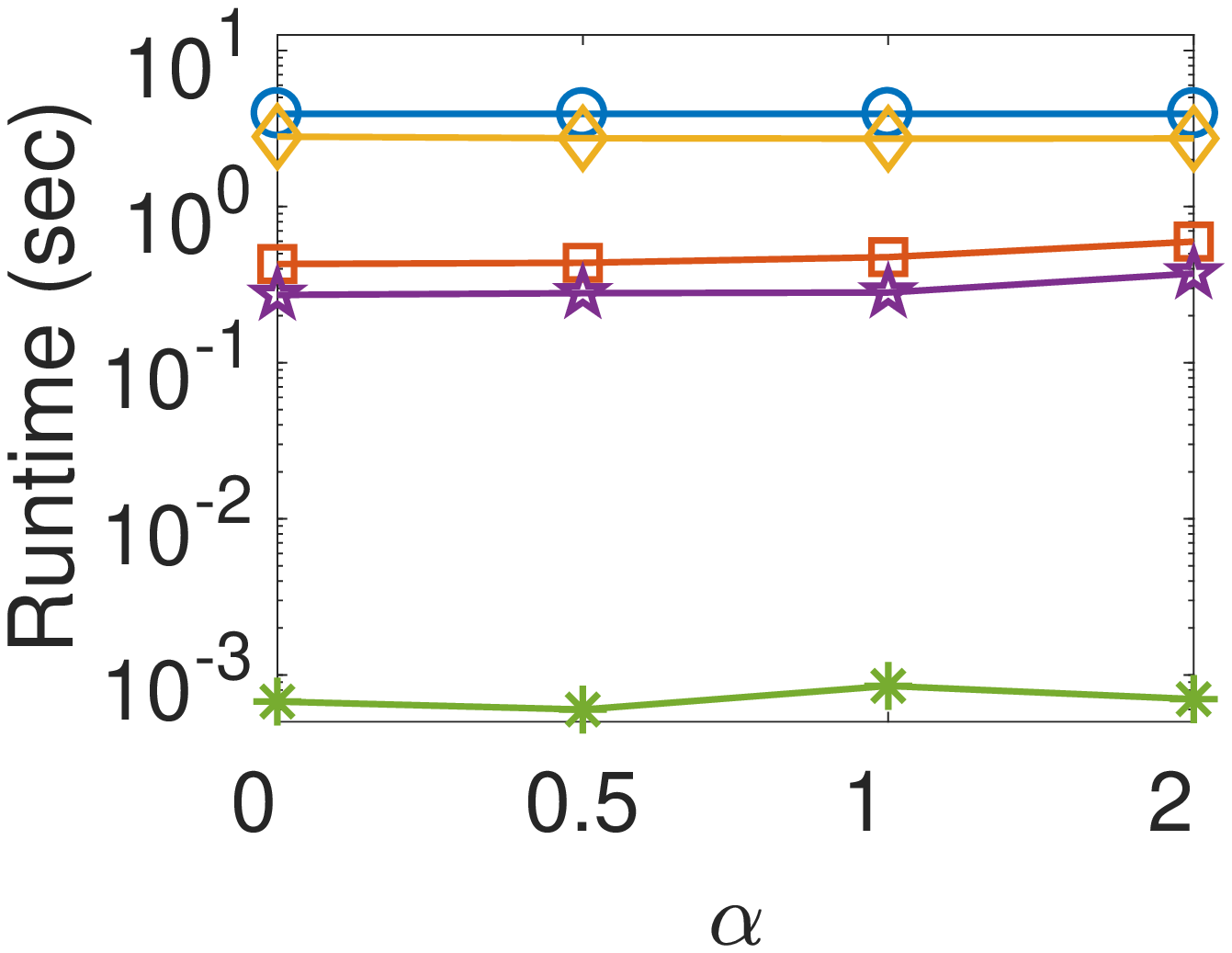}
                \caption{Runtime vs. $\boldsymbol{\alpha}$}
                \label{fig:sg_alpha_t}
        \end{subfigure}%
        ~ 
        \hspace{-0.15in}
        \begin{subfigure}[b]{0.168\textwidth}
                \includegraphics[width=\textwidth]{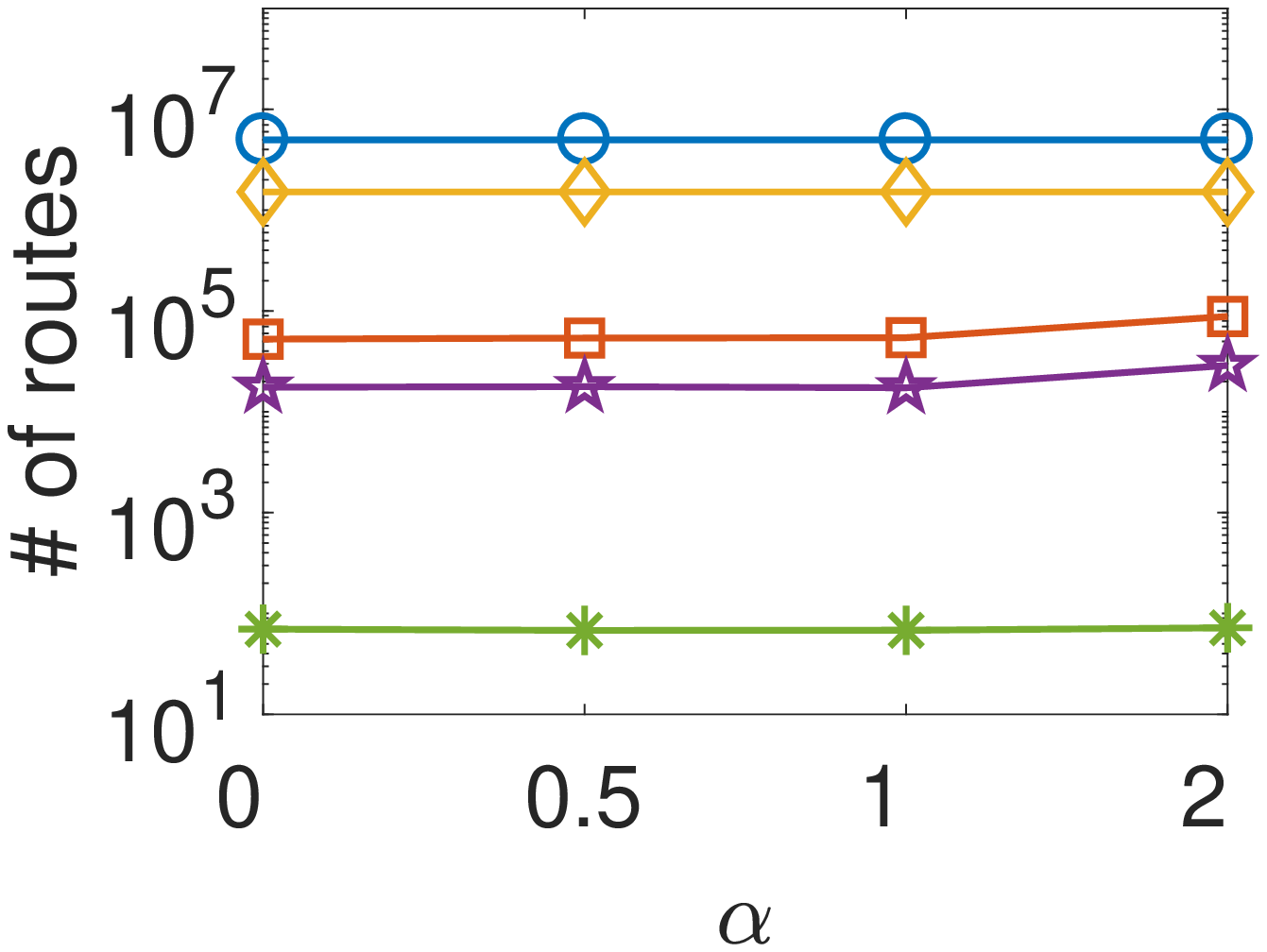}
                \caption{\# of routes vs. $\boldsymbol{\alpha}$}
                \label{fig:sg_alpha_s}
        \end{subfigure}
        ~ 
        \hspace{-0.15in}
        \begin{subfigure}[b]{0.165\textwidth}
                \includegraphics[width=\textwidth]{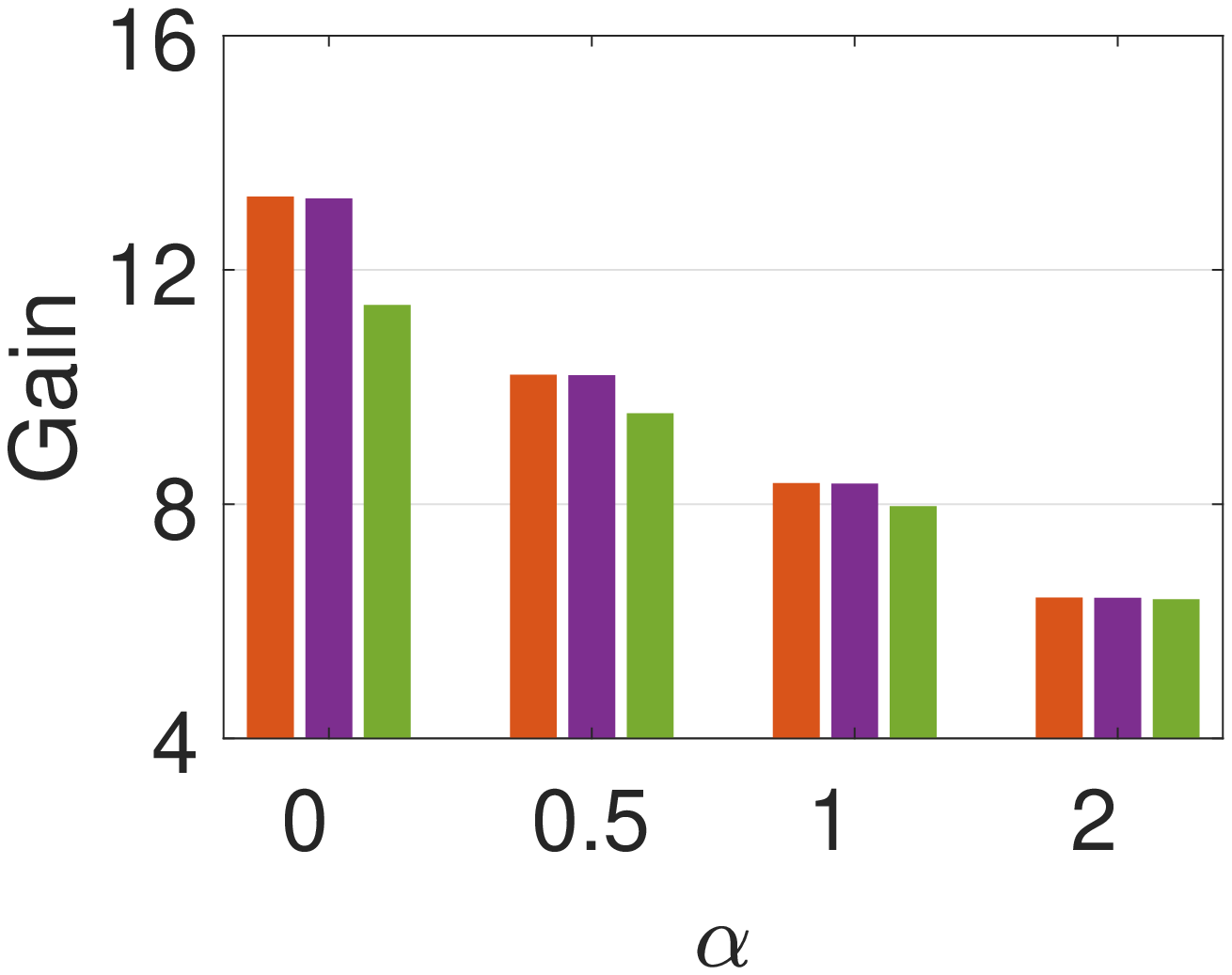}
                \caption{Gain vs. $\boldsymbol{\alpha}$}
                \label{fig:sg_alpha_g}
        \end{subfigure}

        \caption{Experimental results for \textsl{Singapore}. Run time and search space (\# of routes) are in logarithmic scale. The labels beside data points indicate the ratio of queries successfully responded by the algorithm under the parameter setting. No label if no query fail. Data point or bar is not drawn if more than half fail.
        AP can only respond queries with small $b$.
        }\label{fig:sg}
\end{figure}

\begin{figure}[t]  
        \hspace{0.2in}
        \begin{subfigure}[b]{0.25\textwidth}
                \includegraphics[width=\textwidth]{legend_1}
        \end{subfigure}%
        ~ 
        \hspace{0.25in}
        \begin{subfigure}[b]{0.15\textwidth}
                \includegraphics[width=\textwidth]{legend_2}
        \end{subfigure}%

        \begin{subfigure}[b]{0.165\textwidth}
                \includegraphics[width=\textwidth]{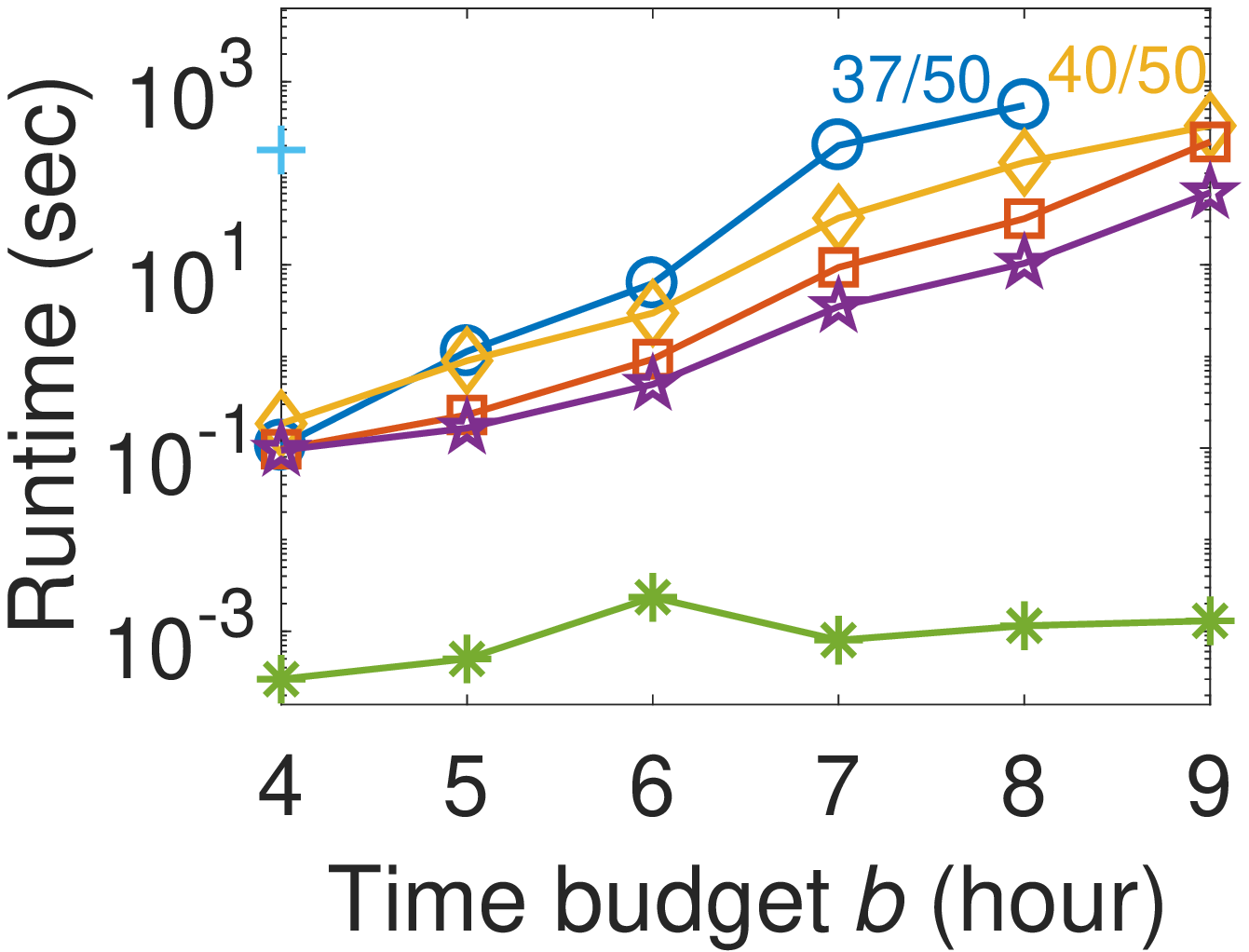}
                \caption{Runtime vs. $b$}
                \label{fig:as_b_t}
        \end{subfigure}%
        ~ 
        \hspace{-0.15in}
        \begin{subfigure}[b]{0.165\textwidth}
                \includegraphics[width=\textwidth]{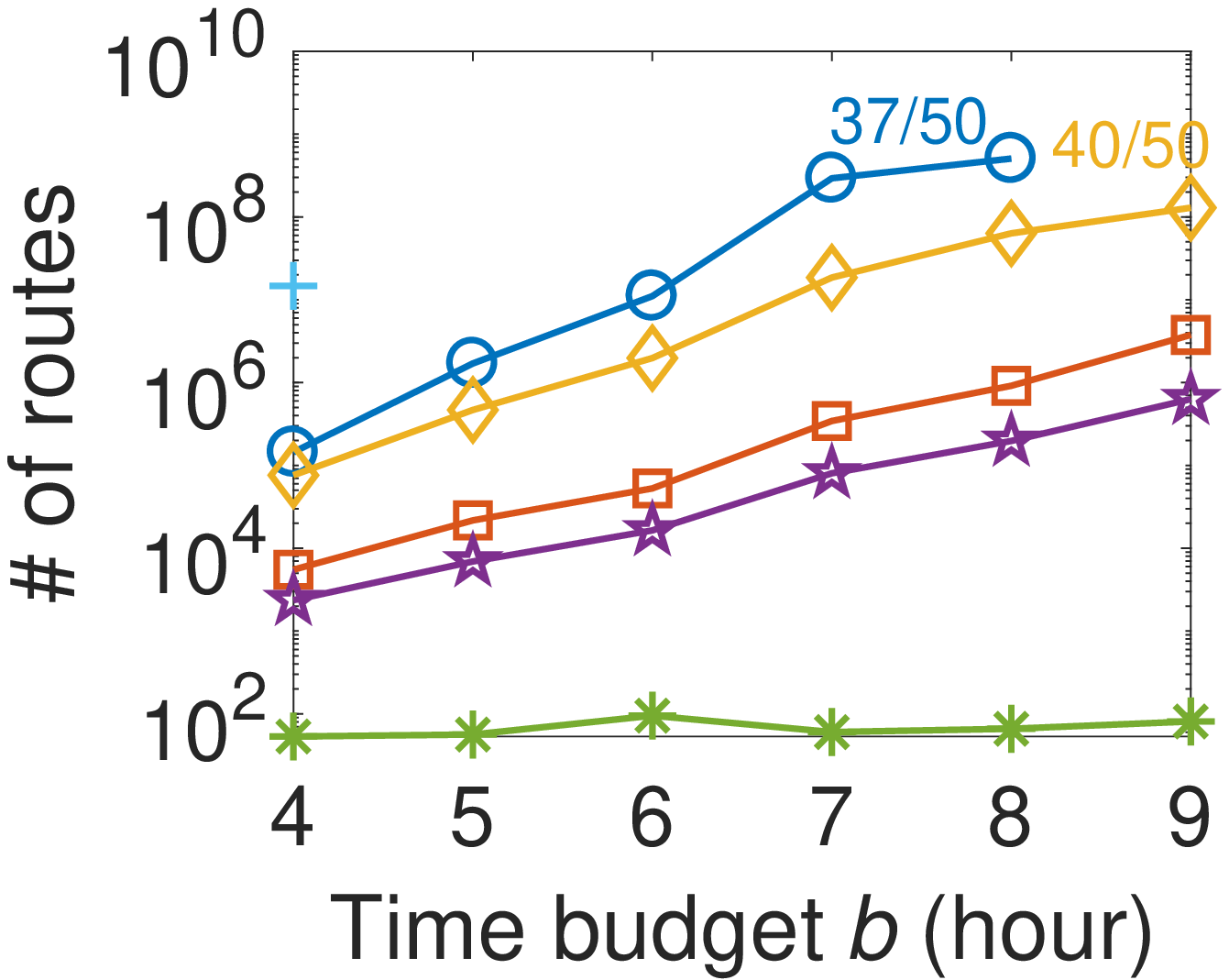}
                \caption{\# of routes vs. $b$}
                \label{fig:as_b_s}
        \end{subfigure}
        ~ 
        \hspace{-0.15in}
        \begin{subfigure}[b]{0.165\textwidth}
                \includegraphics[width=\textwidth]{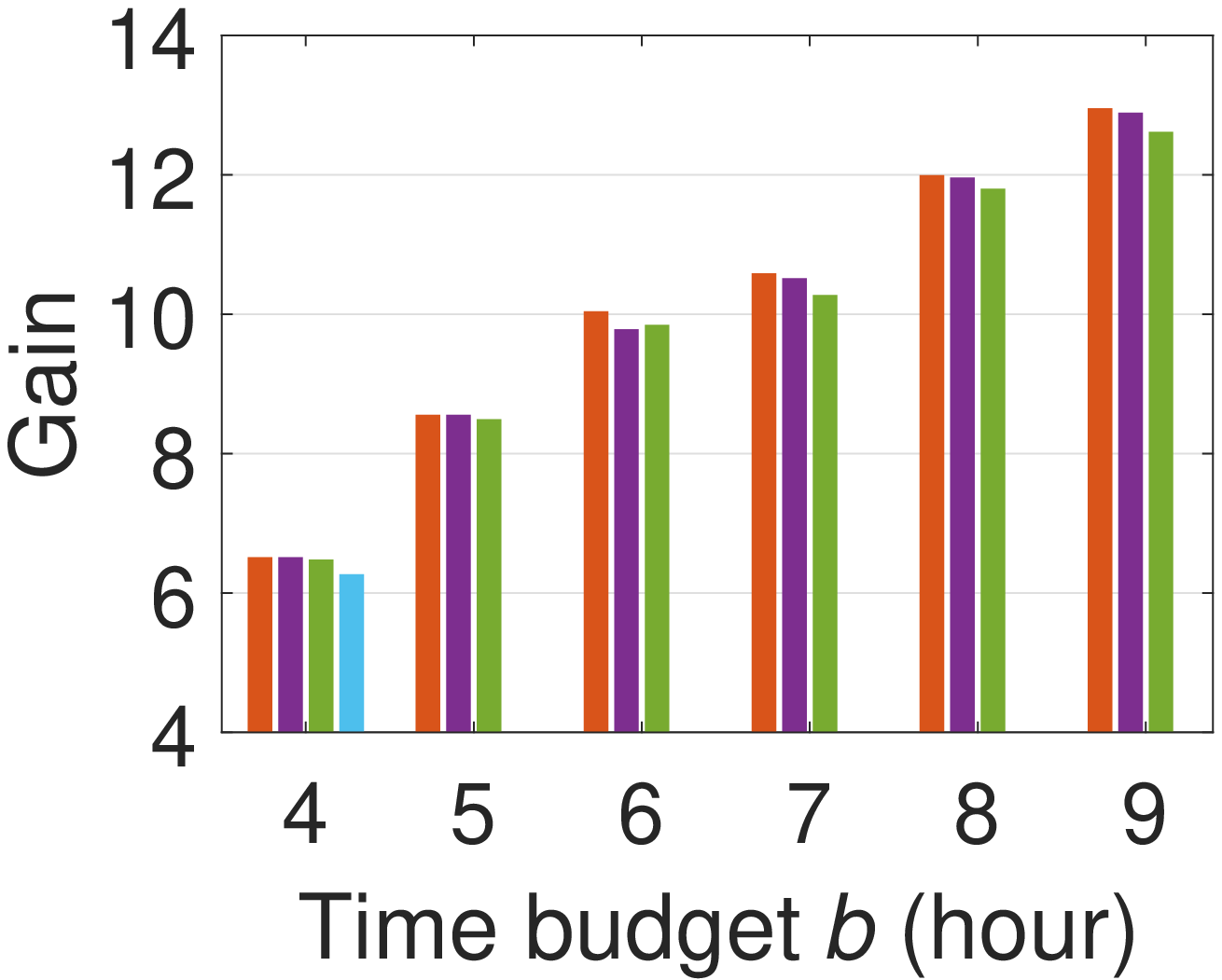}
                \caption{Gain vs. $b$}
                \label{fig:as_b_g}
        \end{subfigure}

        \begin{subfigure}[b]{0.165\textwidth}
                \includegraphics[width=\textwidth]{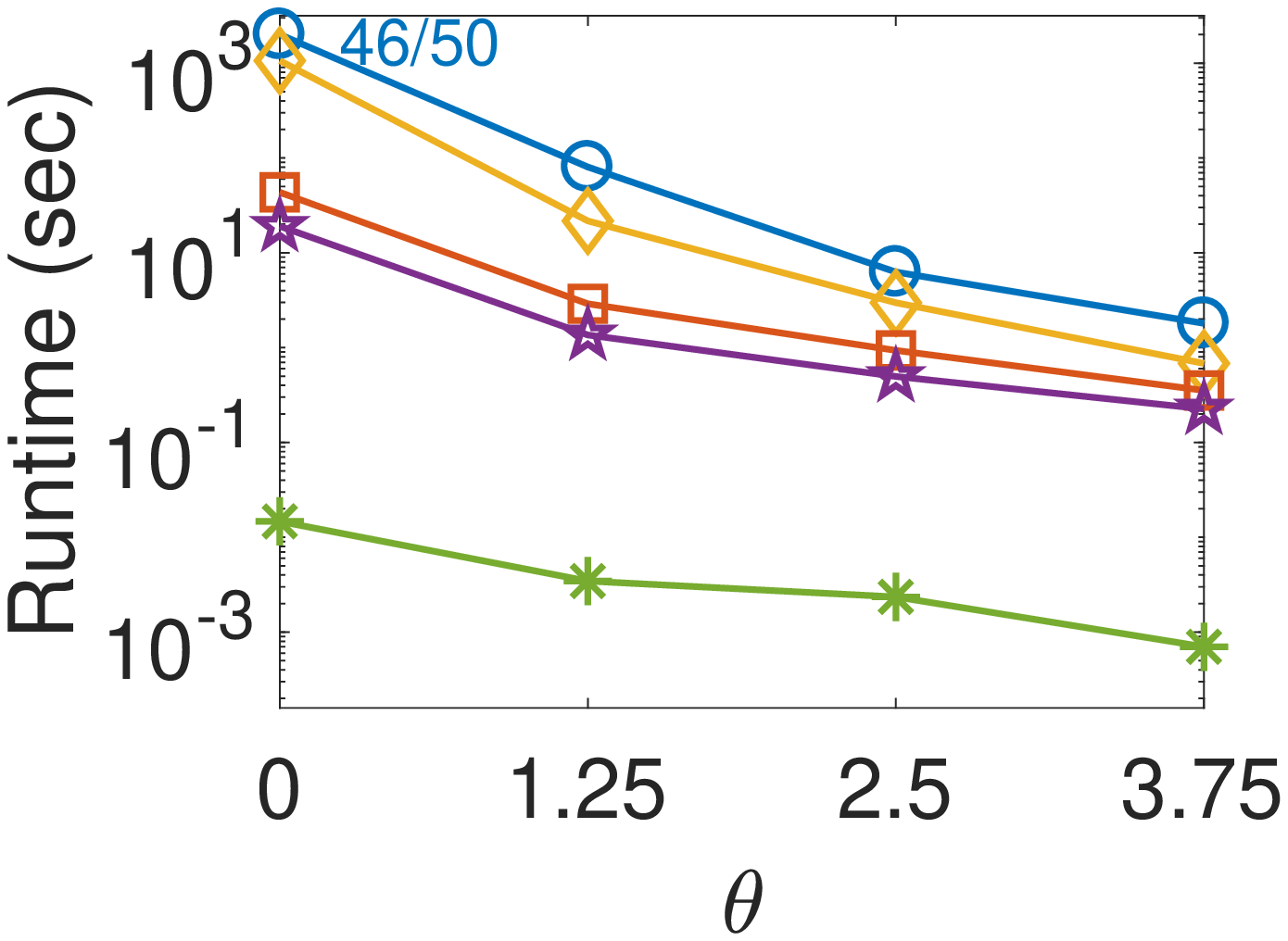}
                \caption{Runtime vs. $\boldsymbol{\theta}$}
                \label{fig:as_theta_t}
        \end{subfigure}%
        ~ 
        \hspace{-0.15in}
        \begin{subfigure}[b]{0.165\textwidth}
                \includegraphics[width=\textwidth]{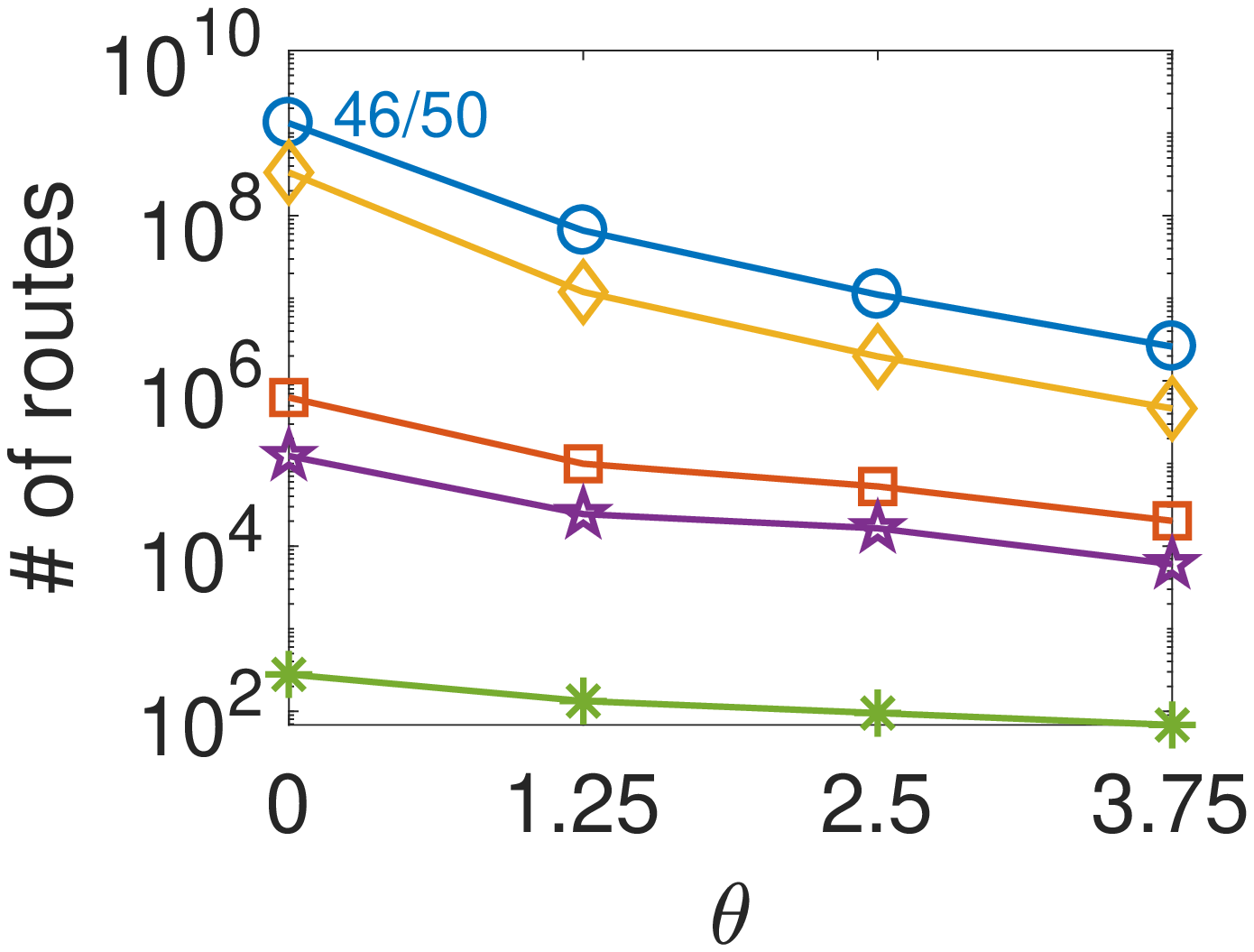}
                \caption{\# of routes vs. $\boldsymbol{\theta}$}
                \label{fig:as_theta_s}
        \end{subfigure}
        ~ 
        \hspace{-0.15in}
        \begin{subfigure}[b]{0.165\textwidth}
                \includegraphics[width=\textwidth]{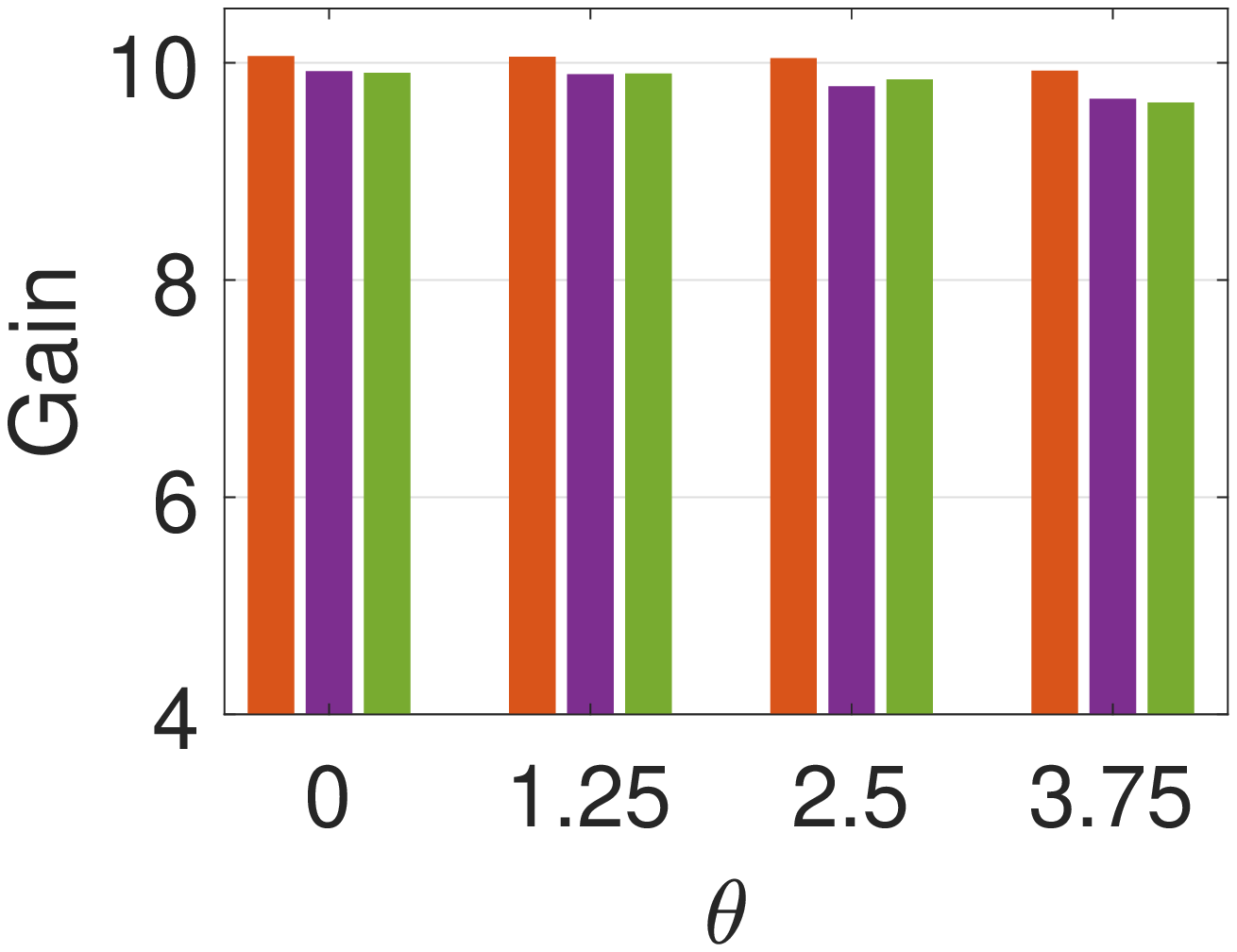}
                \caption{Gain vs. $\boldsymbol{\theta}$}
                \label{fig:as_theta_g}
        \end{subfigure}

        \begin{subfigure}[b]{0.165\textwidth}
                \includegraphics[width=\textwidth]{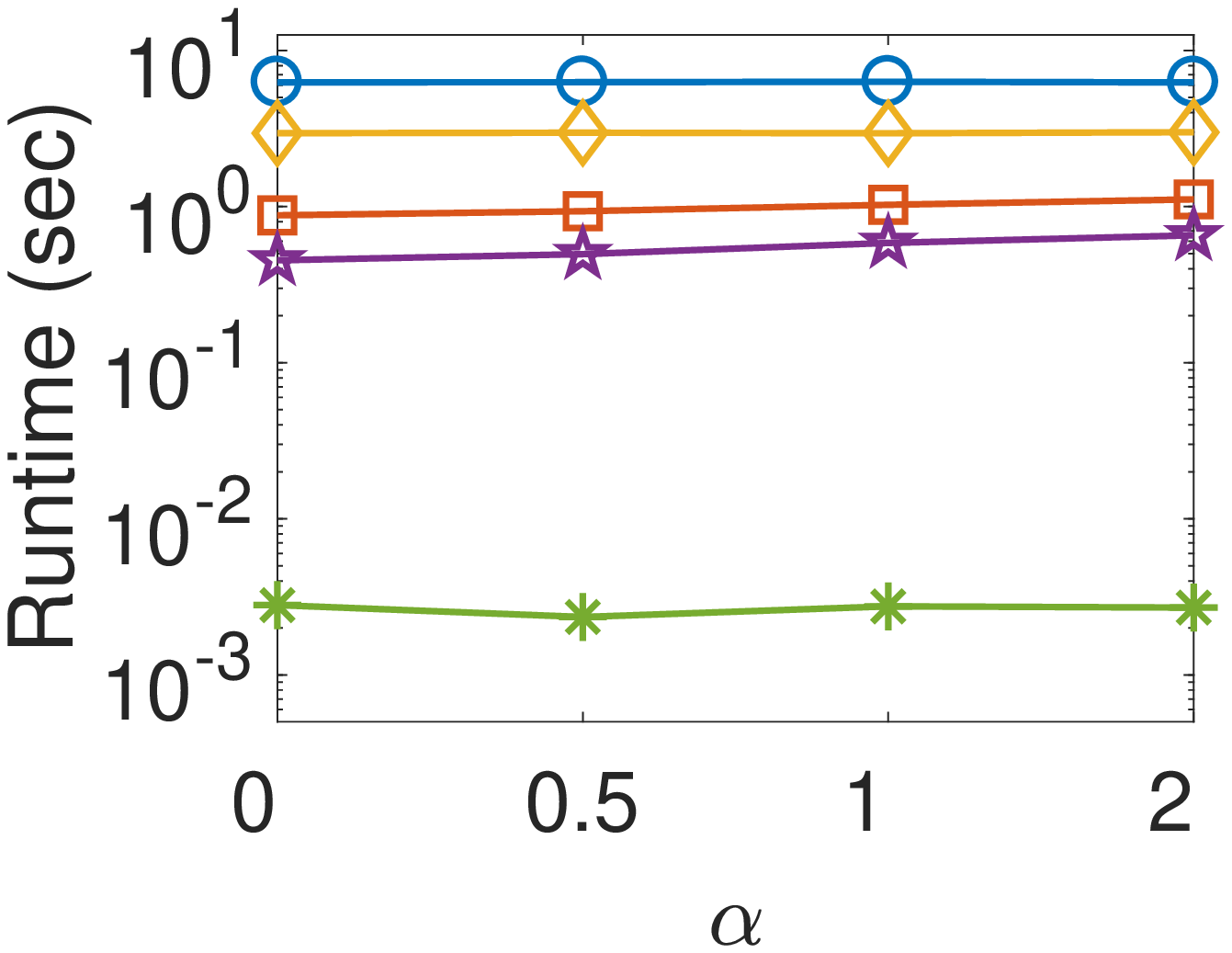}
                \caption{Runtime vs. $\boldsymbol{\alpha}$}
                \label{fig:as_alpha_t}
        \end{subfigure}%
        ~ 
        \hspace{-0.15in}
        \begin{subfigure}[b]{0.168\textwidth}
                \includegraphics[width=\textwidth]{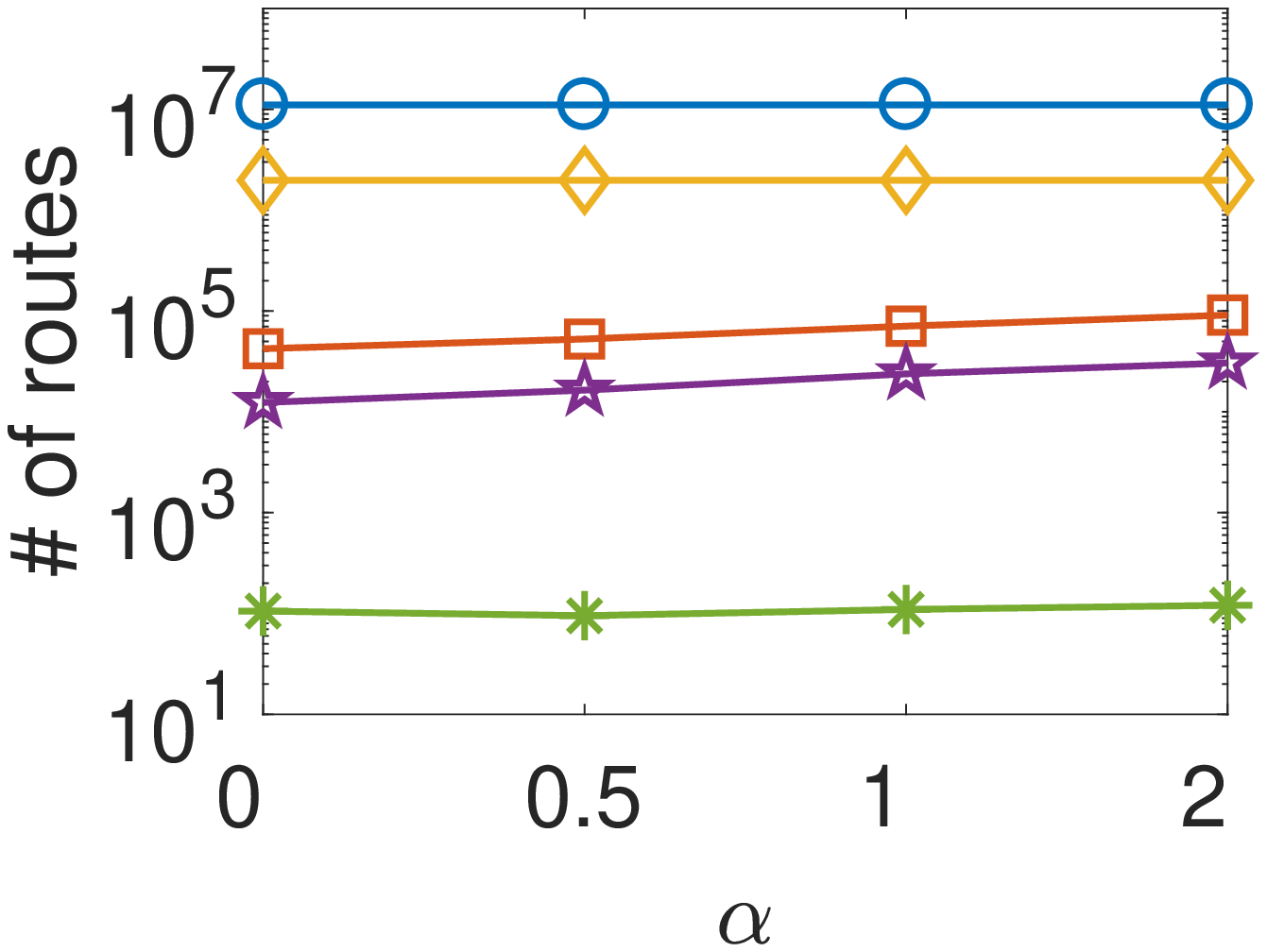}
                \caption{\# of routes vs. $\boldsymbol{\alpha}$}
                \label{fig:as_alpha_s}
        \end{subfigure}
        ~ 
        \hspace{-0.15in}
        \begin{subfigure}[b]{0.165\textwidth}
                \includegraphics[width=\textwidth]{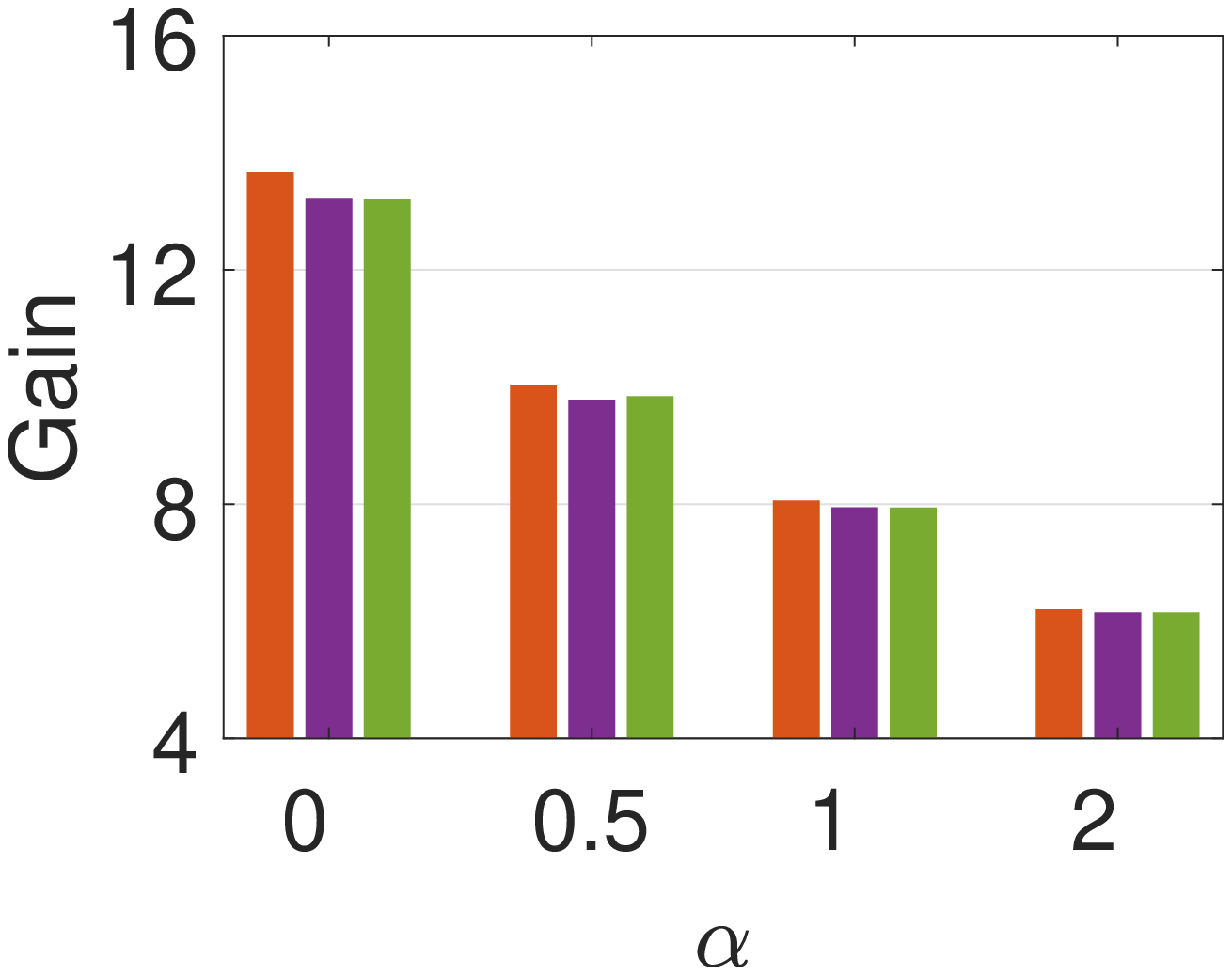}
                \caption{Gain vs. $\boldsymbol{\alpha}$}
                \label{fig:as_alpha_g}
        \end{subfigure}

        \caption{Experimental results for \textsl{Austin}}\label{fig:as}
\end{figure}

\subsubsection{Impact of budget $b$ (Figure \ref{fig:sg_b_t} - \ref{fig:sg_b_g} and \ref{fig:as_b_t} - \ref{fig:as_b_g})}  \label{sec:inf_b}

$b$ affects the length of routes (the number of POIs included). 


AP is the worst.
This is consistent with the analysis in Section \ref{sec:complex} that AP suffers from a high complexity when $b$ and $OPT$ have many discrete values. $b=6$ has 360 discrete values in minute, 
a majority of the queries cannot finish.
The efficiency of BF drops dramatically as $b$ increases,
since the number of open routes becomes huger and 
processing them is both time and memory consuming.

PACER+1's search space is two orders of magnitude smaller than that of BF, 
thanks to the compact state enumeration and the cost dominance pruning.
PACER+2 is the best among all the exact algorithms.
Compared with PACER+1, the one order of magnitude speedup in runtime and  
two orders of magnitude reduction in search space clearly demonstrates the additional pruning power of the Gain based upper bound pruning. 
PACER-SC trades optimality for efficiency.
Surprisingly, as shown in Figure \ref{fig:sg_b_g} and \ref{fig:as_b_g}, PACER-SC performs quite well with Gain being close to
that of OPTIMAL.

GR always finishes in less than $10^{-2}$ seconds.
For \textsl{Singapore}, the achieved gain is far worse than that of OPTIMAL, compared with the difference for \textsl{Austin}.
This is because $x$ and $y$ for \textsl{Singapore} are relatively remote to the central city. 
GR will greedily
select a POI $i$ not too far away from $x$ and $y$ (Eqn. (\ref{eq:gr_ratio})), thus, many POIs with possibly higher feature ratings located in the central city are less likely to be chosen. In contrast, $x$ and $y$ for \textsl{Austin} are in the downtown area and this situation is avoided in most cases.


%
%
%

\subsubsection{Impact of of filtering threshold $\boldsymbol{\theta}$}

In Figure \ref{fig:sg_theta_t} - \ref{fig:sg_theta_g} and \ref{fig:as_theta_t} - \ref{fig:as_theta_g}, 
as $\boldsymbol{\theta}$ increases, the POI candidate set becomes smaller and all the algorithms run faster. The majority of the queries for AP cannot finish and its results are not shown.
%
The study suggests that a reasonable value of $\boldsymbol{\theta}$, e.g., 2.5,
reduces the searching cost greatly while having little loss on
the quality of the found routes.

\subsubsection{Impact of route diversity parameter $\boldsymbol{\alpha}$ (Figure \ref{fig:sg_alpha_t} - \ref{fig:sg_alpha_g} and \ref{fig:as_alpha_t} - \ref{fig:as_alpha_g})}

PACER+2 and PACER-SC are slightly affected when $\boldsymbol{\alpha}$ varies.
As $\boldsymbol{\alpha}$ increases, the marginal return diminishes faster and
$\Phi_h$ behaves more towards the max aggregation. In this case,
Pruning-2 becomes less effective. When $\boldsymbol{\alpha} = 0$ (the sum aggregation), both $Gain$ and the difference between OPTIMAL and GR reach the maximum.

Figure \ref{fig:case} illustrates the effectiveness of our power law function in Eqn. (\ref{eq:gain_power_2}) for modeling the personalized route diversity requirement.
We run two
queries on \textsl{Singapore}, one with $\boldsymbol{\alpha}=0.5$, which specifies a diversity requirement, and one with $\boldsymbol{\alpha}=0$, which specifies the usual sum aggregation.
The other query parameters are the same.
The figures show the best routes found for each query, with the POIs on a route labeled sequentially as A, B $\cdots$. The red dots represent the source $x$ and destination $y$.
The route for $\boldsymbol{\alpha}=0.5$ covers all specified features, i.e., two POIs for each feature, while maximizing the total $Gain$.
While the route for $\boldsymbol{\alpha} = 0$ has four parks out of five POIs due to the higher weight of Park in $\mathbf{w}$;
thus, it is less preferred by a user who values diversity. 
In fact, the second route's $Gain$ value when evaluated using $\boldsymbol{\alpha}=0.5$ is only 6.60.

\begin{figure}[]  
        \begin{subfigure}[b]{0.42\textwidth}  
                \includegraphics[width=\textwidth]{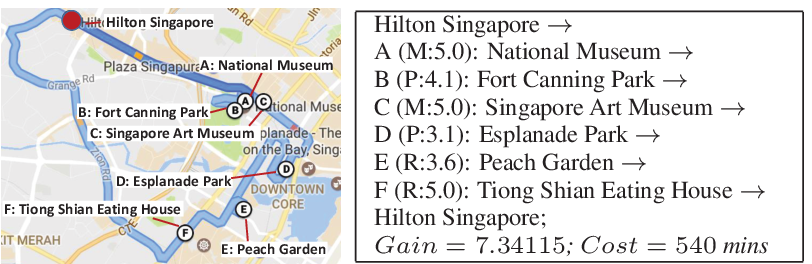}
                \caption{$\boldsymbol{\alpha} = 0.5$ (with diversity requirement)}
                \label{fig:alpha_0.5}
        \end{subfigure}%

        \begin{subfigure}[b]{0.42\textwidth}
                \includegraphics[width=\textwidth]{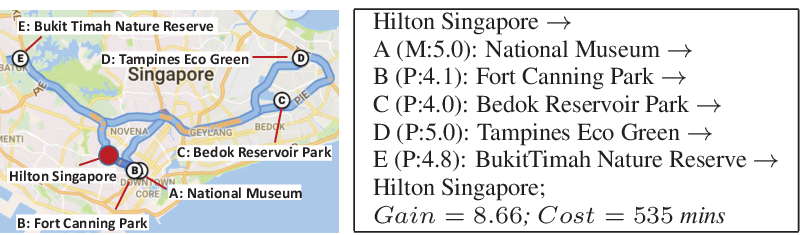}
                \caption{$\boldsymbol{\alpha} = 0$ (without diversity requirement)}
                \label{fig:alpha_0}
        \end{subfigure}%
		
        \caption{Two routes found from \textsl{Singapore} by PACER+2 for the query $Q=(x, y, b = 9, \mathbf{w} = (P:0.4, M:0.3, R:0.3), \boldsymbol{\theta} = 2.5, \boldsymbol{\alpha})$, where $x$ and $y$ are Hilton Singapore, and P, M and R represent Park, Museum, and Chinese Restaurant.} \label{fig:case}
\end{figure}

\subsubsection{Impact of $k$} \label{sec:impact_k}
We vary $k$ in range $[1,100]$ while fixing $b,\boldsymbol{\theta},\boldsymbol{\alpha}$ at the default values and run the algorithms, except GR and AP, on both datasets.
As $k$ only influences the gain-based pruning,
the performance of BF and PACER+1 are unchanged.
For PACER+2 and PACER-SC, the change is limited (less than 25\% slower for $k=100$). Because when $k$ is small, the Gain of the $k$-th best route is usually not far away to that of the best route, thus, the marginal gain upper bound pruning is not seriously influenced.
We omit the figures due to limited space.


\subsection{Comparison with A*} \label{sec:Astar}
A* \cite{zeng2015optimal} only works for their keyword coverage function:
$\Phi_{h}(\mathcal{P}_V) = 1- {\prod}_{i \in \mathcal{P}_V} [1 - \tilde{\mathbf{F}}_{i,h}]$,
and finds single route. In \cite{zeng2015optimal}, $\tilde{\mathbf{F}}_{i,h}$ is in the range $[0,1]$ and it is set to 1 if the number of check-ins on POI $i$ for feature $h$ is above average. In this case, the single POI in $\mathcal{P}$ yields the maximum $\Phi_{h}(\mathcal{P}_V)$ value; the feature $h$ of other POIs will be ignored.
For a fair comparison, we set $\beta = 0.5$ in Eqn. (\ref{eq:agg_rating}) for both algorithms, we also leverage our indices to speed up A*. 
Note that the maximum $b$ in \cite{zeng2015optimal} is 15 kilometers in their efficiency study, which is about 20 minutes by Google Maps under driving mode.

Figure \ref{fig:astar} shows the comparison between PACER+2 and the modified A* on both datasets.
The report of $Gain$ is omitted as they are both exact algorithms.
We also omit the comparison of search space due to page limit 
(PACER+2 searches one to two orders of magnitude less than A*).
Apparently, PACER+2 outperforms A*, especially for a large $b$. A few queries of A* on \textsl{Austin} even failed for $b = 9$.
Although A* has a pruning strategy specifically for their keyword coverage function, the search strategy itself is a bottleneck.
Besides, their pruning based on the greedy algorithm in \cite{khuller1999budgeted} has a bound looser than ours.
In fact, the experiments in \cite{zeng2015optimal} showed that A* is just 2-3 times faster than the brute-force algorithm.

\begin{figure}[]  
        \centering
        \begin{subfigure}[b]{0.17\textwidth}
                \includegraphics[width=\textwidth]{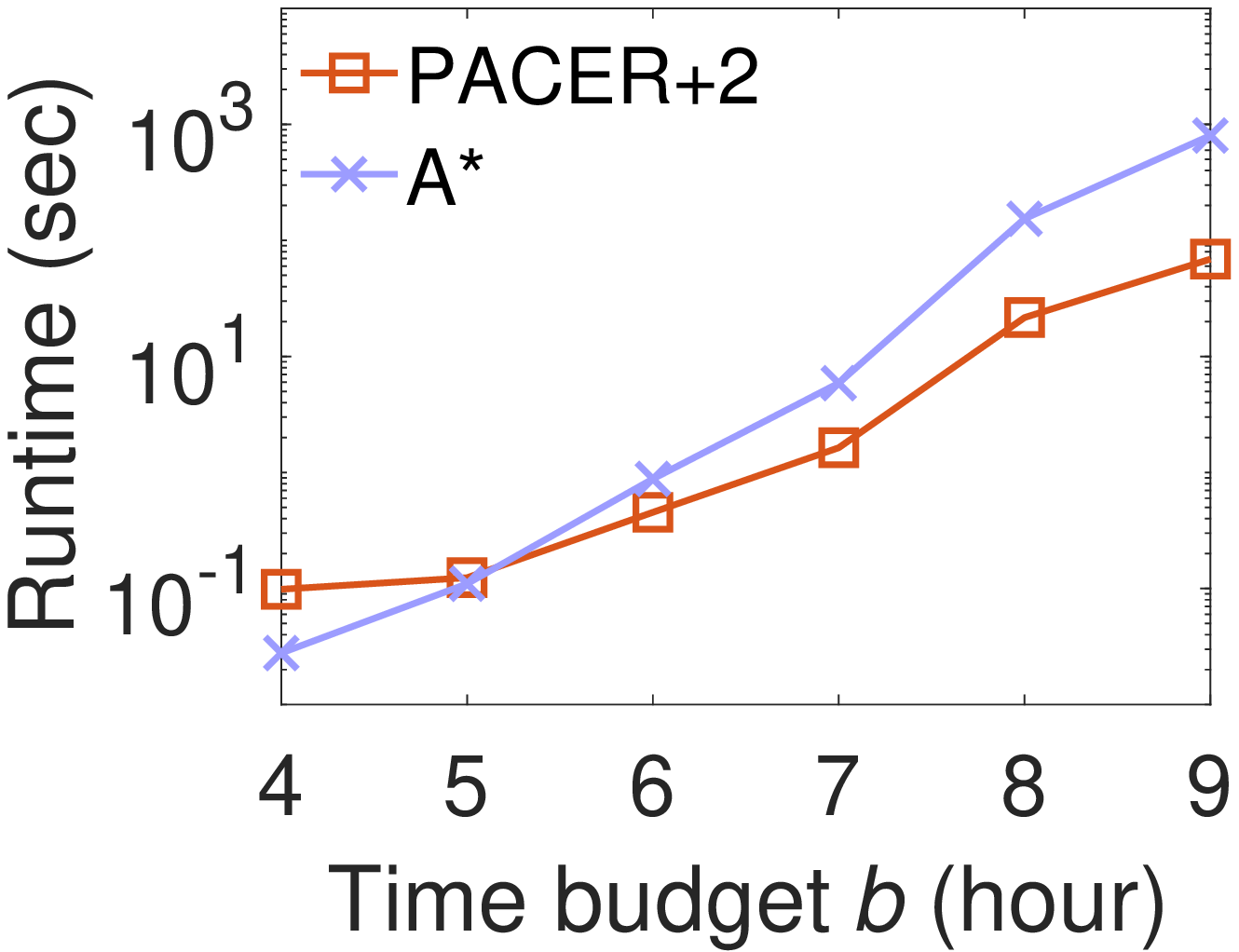}
                \caption{Runtime - \textsl{Singapore}}
                \label{fig:sg_cov_t}
        \end{subfigure}
        ~ 
        \hspace{0.0in}
%
        \begin{subfigure}[b]{0.17\textwidth}
        			\includegraphics[width=\textwidth]{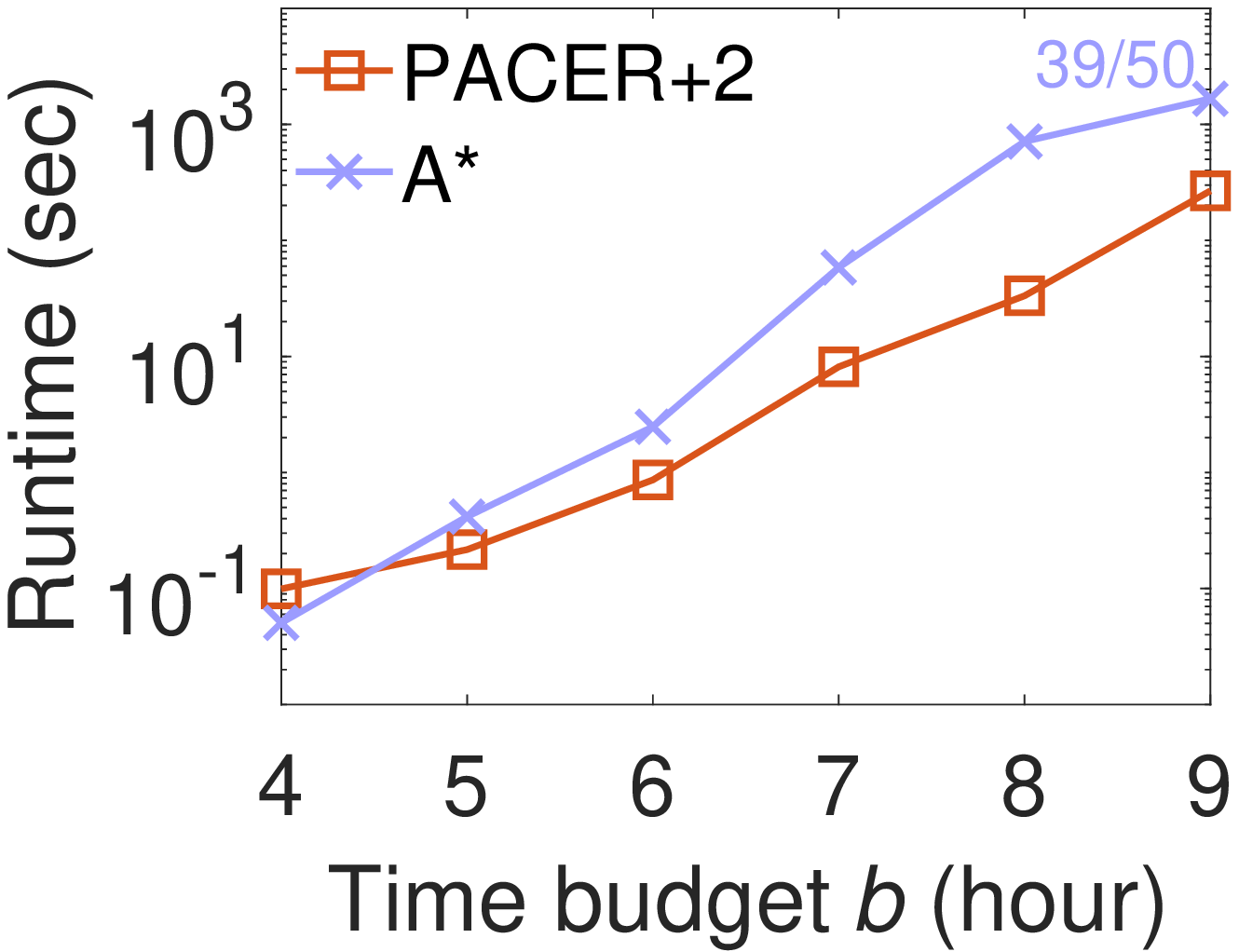}
            		\caption{Runtime - \textsl{Austin}}
                \label{fig:as_cov_t}
        \end{subfigure}
        \caption{PACER+2 vs. A* (logarithmic scale).}\label{fig:astar}
\end{figure}

\section{Conclusion} \label{sec:conclusion}
We considered a personalized top-$k$ route search problem. 
The large scale of POI maps and the combination of search in feature space and path space make this problem computationally hard. The personalized route diversity requirement further demands a solution that works for any reasonable route diversity specification. 
We presented an exact search algorithm with multiple pruning strategies to address these challenges.
We also presented high-performance heuristic solutions.
The analytical evaluation suggested 
that our solutions significantly outperform the state-of-the-art algorithms.

\textbf{Acknowledgments.} Ke Wang's work was partially supported by a discovery grant from The Natural Sciences and Engineering Research Council of Canada (NSERC).


\balance


\bibliographystyle{ACM-Reference-Format}
\bibliography{ref}

\end{document}